\newcommand\onlysubm[1]{} \newcommand\onlyarxiv[1]{#1} \newcommand\exceptarxiv[1]{} \newcommand\exceptsubm[1]{#1}
\newcommand\onlylipics[1]{#1}
\newcommand\exceptlipics[1]{}
\newcommand\onlyllncs[1]{}
\newcommand\exceptllncs[1]{#1}
\newcommand\onlynone[1]{}
\newcommand\onlylipics[1]{}
\newcommand\exceptlipics[1]{#1}
\newcommand\onlyllncs[1]{#1}
\newcommand\exceptllncs[1]{}
\newcommand\onlynone[1]{}
\providecommand\ifllncs[2]{\onlyllncs{#1}\exceptllncs{#2}}
\providecommand\ifsubm[2]{\onlysubm{#1}\exceptsubm{#2}}
\def\cleartheorem#1{\expandafter\let\csname#1\endcsname\relax
	\expandafter\let\csname c@#1\endcsname\relax
}
	\theoremstyle{definition}
	\newtheorem{thm}{Theorem}
\newtheorem{lemma}[thm]{Lemma}
\newtheorem{propo}[thm]{Proposition}
\newtheorem{proposition}[thm]{Proposition}
\newtheorem{definition} [thm]{Definition}
\newtheorem{corollary} [thm]{Corollary}
	\spnewtheorem{thm}{Theorem}{\bfseries}{\rmfamily}
	\spnewtheorem{theorem}[thm]{Theorem}{\bfseries}{\rmfamily}
	\spnewtheorem{lemma}[thm]{Lemma}{\bfseries}{\rmfamily}
	\spnewtheorem{propo}[theorem]{Proposition}{\bfseries}{\rmfamily}
	\spnewtheorem{corollary}[theorem]{Corollary}{\bfseries}{\rmfamily}
	\spnewtheorem{proposition}[theorem]{Proposition}{\bfseries}{\rmfamily}
	\spnewtheorem{definition}[theorem]{Definition}{\bfseries}{\rmfamily}
	\providecommand\qedhere{\tag*{\qed}}
	\providecommand\placeqed{\hfill\qed}
\providecommand\placeqed\relax
	\let\theGamma=\Gamma
	\let\thePhi=\Phi
	\let\theLambda=\Lambda
	\let\theOmega=\Omega
	\let\theTheta=\Theta
	\renewcommand\Gamma{\mathrm{\theGamma}}
	\renewcommand\Phi{\mathrm{\thePhi}}
	\renewcommand\Lambda{\mathrm{\theLambda}}
	\renewcommand\Omega{\mathrm{\theOmega}}
	\renewcommand\Theta{\mathrm{\theTheta}}
	\renewcommand\Upsilon{\mathrm{\Upsilon}}
	\renewcommand\Psi{\mathrm{\Psi}}
	\renewcommand\Xi{\mathrm{\Xi}}
	\renewcommand\Sigma{\mathrm{\Sigma}}
\newcommand\wrap[2]{
    \expandafter\expandafter\expandafter\let\expandafter\expandafter\csname the#1\endcsname\csname #1\endcsname
    \expandafter\expandafter\expandafter\def\expandafter\expandafter\csname #1\endcsname\expandafter{
        \expandafter\let\expandafter\origin\csname the#1\endcsname
        #2
    }
}
\newcommand\map[3][ ]{
		\makeatletter
		\let\internal@map=\relax
		\NewDocumentCommand\internal@map{ m >{\SplitList{#1}} m }{ \ProcessList{##2}##1 }
		\internal@map#2{#3}
		\makeatother
}
\newcommand\true{\mathrm{true}}
\newcommand\false{\mathrm{false}}
\newcommand\komma{,\,}
\NewDocumentCommand\overann{s O{\big} m m}{
	\IfBooleanTF{#1}{\overset{\mathclap{#4}}{#3}}{
		\overset{\mathclap{\substack{{#4} \\ #2\downarrow}}}{#3}}
}
\NewDocumentCommand\underann{s O{\big} m m}{
	\IfBooleanTF{#1}{\overset{\mathclap{#4}}{#3}}{
		\underset{\mathclap{\substack{ #2\uparrow \\ {#4}}}}{#3}}
}
\DeclareRobustCommand{\stopp}{\leavevmode\unskip\penalty9999 \hbox{}\nobreak\hfill
	\quad\hbox{\ensuremath{\lrcorner}}}
\providecommand\stopphier{\tag*{\ensuremath{\lrcorner}}}
\providecommand\localqedhere{\tag*{\ensuremath{\vartriangleleft}}}
\NewDocumentCommand{\semi}{t& t| t. O{;}}{\IfBooleanTF#3{.}{#4\IfBooleanT{#1}{\text{\ and\ }}\IfBooleanT{#2}{\text{\ or\ }}}}
\NewDocumentCommand\dpm{s m}{
	{\,\hfill$\displaystyle{#2}$\hfill\,}\IfBooleanTF{#1}{\linebreak}{}
}
\newcommand\sing[1]{\{#1\}}
\newcommand\Sing[1]{\left\{#1\right\}}
\newcommand\N{\ensuremath{\mathds N}}
\newcommand\Q{\ensuremath{\mathds Q}}
\DeclareMathOperator{\poly}{poly}
\DeclarePairedDelimiter\abs{\lvert}{\rvert}\DeclarePairedDelimiter\ceil{\lceil}{\rceil}
\let\O=\relax
\DeclareMathOperator{\O}{\mathcal O}
\newcommand\defeq\coloneqq
\newcommand\eps\varepsilon
	\newcommand{\dotminus}{\mathbin{\text{\@dotminus}}}
	\newcommand{\@dotminus}{\ooalign{\hidewidth\raise1ex\hbox{.}\hidewidth\cr$\m@th-$\cr}}
\DeclarePairedDelimiter{\internalXenpar}{\lparen}{\rparen}
\NewDocumentCommand{\enpar}{s t~ m}{
	\IfBooleanTF{#2}{{\IfBooleanTF{#1}{\internalXenpar*{#3}}{\internalXenpar{#3}}}}{\IfBooleanTF{#1}{\internalXenpar*{#3}}{\internalXenpar{#3}}}}
\newcommand\totherwise{\text{otherwise}}
\newcommand\page[1]{p.\@ #1}
\newcommand\Lemma[1]{Lemma #1}
\newcommand*\linenomathpatch[1]{\expandafter\pretocmd\csname #1\endcsname {\linenomath}{}{}\expandafter\pretocmd\csname #1*\endcsname{\linenomath}{}{}\expandafter\apptocmd\csname end#1\endcsname {\endlinenomath}{}{}\expandafter\apptocmd\csname end#1*\endcsname{\endlinenomath}{}{}}
\newcommand*\linenomathpatchAMS[1]{\expandafter\pretocmd\csname #1\endcsname {\linenomathAMS}{}{}\expandafter\pretocmd\csname #1*\endcsname{\linenomathAMS}{}{}\expandafter\apptocmd\csname end#1\endcsname {\endlinenomath}{}{}\expandafter\apptocmd\csname end#1*\endcsname{\endlinenomath}{}{}}
  \let\linenomathAMS\linenomathWithnumbers
\patchcmd\linenomathAMS{\advance\postdisplaypenalty\linenopenalty}{}{}{}
  \let\linenomathAMS\linenomathNonumbers
\newcommand\multiautoref[1]{\@first@ref#1,@}
\def\@throw@dot#1.#2@{#1}\def\@set@refname#1{\edef\@tmp{\getrefbykeydefault{#1}{anchor}{}}\xdef\@tmp{\expandafter\@throw@dot\@tmp.@}\ltx@IfUndefined{\@tmp autorefnameplural}{\def\@refname{\@nameuse{\@tmp autorefname}s}}{\def\@refname{\@nameuse{\@tmp autorefnameplural}}}}
\def\@first@ref#1,#2{\ifx#2@\autoref{#1}\let\@nextref\@gobble \else \@set@refname{#1}\@refname~\ref{#1}\let\@nextref\@next@ref \fi \@nextref#2}
\def\@next@ref#1,#2{\ifx#2@ and~\ref{#1}\let\@nextref\@gobble \else, \ref{#1}\fi \@nextref#2}
\newcounter{counter:resume}
\newcommand\suspend{\setcounter{counter:resume}{\value{enumi}}}
\newcommand\resume{\setcounter{enumi}{\value{counter:resume}}}
\def\midvert{\egroup : \bgroup}
\def\mid@vertical{{:}}
\newcommand\B{{\Gamma}}
\newcommand\valid{\Lambda}
\newcommand\smin[1]{\setminus \sing#1}
\newcommand\partials[1]{\partial #1}
\newcommand\tup[2]{\Big( #1,\, #2 \Big)} 
\newcommand\tuplike[1]{\big( #1 \big)}
\newcommand\myqedhere{\qedhere\!\!\!\!\!\!}
\newcommand\omitted{\begin{proof}[Omitted, see appendix]\end{proof}}
\providecommand\mysec[1]{\noindent{\textbf{#1}\hspace{0.5em}}}
\providecommand\mysec[1]{\vspace{0.4em}\noindent\textbf{\large #1}.}
\newcommand\defequiv{\mathbin{\vcentcolon\equiv}}
\NewDocumentCommand{\itemref}{s m}{\IfBooleanTF#1{(\ref*{#2})}{\hyperref[#2]{(\ref*{#2})}}}
\renewcommand\subref[2]{\hyperref[#1:#2]{\autoref*{#1}~\itemref*{#1:#2}}}
\newcommand\stmtref[1]{\hyperref[#1]{Statement \itemref*{#1}}}
\newcommand\dref[1]{\text{\hyperref[#1]{(\ref*{#1})}}}
\newcommand\autodref[1]{\text{\autoref{#1}}}
\newcommand\dsubref[2]{\text{\hyperref[#1:#2]{(\ref*{#1}\ref*{#1:#2})}}}
\newcommand\autodsubref[2]{\text{\hyperref[#1:#2]{\autoref*{#1}\ref*{#1:#2}}}}
\newcommand\subpoint[1]{\vspace{0.1em}\noindent \exceptllncs{\textit{Proof of {#1}.}}\onlyllncs{\underline{Proof of {#1}.}}}
\title{Solving Cut-Problems in Quadratic Time for Graphs With Bounded Treewidth}
	\author{Hauke Brinkop}{Kiel University, Kiel, Germany}{brinkop.edu@gmail.com}{https://orcid.org/0000-0002-7791-2353}{}
	\author{Klaus Jansen}{Kiel University, Kiel, Germany}{kj@informatik.uni-kiel.de}{https://orcid.org/0000-0001-8358-6796}{}
	\institute{Kiel University}
	\author{
		Hauke Brinkop
\orcidID{0000-0002-7791-2353}
		\and
		Klaus Jansen
\orcidID{0000-0001-8358-6796}
	}
\authorrunning{H.\@ Brinkop \and K.\@ Jansen} \onlylipics{\Copyright{Hauke Brinkop}
	\keywords{Max-Bisection, Max-Cut, Min-Bisection, Sparsest-Cut, Densest-Cut, Min-Edge-Expansion,
Graph Problems, Balanced-Min-Cut, FPT, Treewidth}
\newcommand\submorarxiv[2]{\onlysubm{#1}\onlyarxiv{#2}}
\providecommand\para[1]{\paragraph*{#1}}
\providecommand\paras[1]{\paragraph*{#1}}
\newcommand\temptext{\exceptllncs{Proof}}
\let\tilde=\widetilde
\let\hat=\widehat
\let\sqleq=\sqsubseteq
\renewcommand\uplus{\mathbin{\dot\cup}}
\renewcommand\biguplus{\mathop{\dot\bigcup}}
\let\defequiv=\defeq \let\equiv==
\renewcommand\alpha{w} \newcommand\fullversion{extended version~\cite{fullversion}}
\begin{document} \maketitle
	\begin{abstract}\noindent
	In the problem (Unweighted) Max-Cut we are given a graph $G = (V,E)$ and asked
	for a set $S \subseteq V$ such that the number of edges from $S$ to $V \setminus S$
	is maximal. In this paper we consider an even harder problem: (Weighted) Max-Bisection. Here 
	we are given an undirected graph $G = (V,E)$ and a weight function $w \colon E \to \Q_{>0}$ 
	and the task is to find a set $S \subseteq V$ such that
	(i) the sum of the weights of edges from $S$ is maximal; and 
	(ii) $S$ contains  $\ceil*{\frac{n}{2}}$ vertices (where $n = \abs V$). We design
	a framework that allows to solve this problem in time $\O(2^t n^2)$ if 
	a tree decomposition of width $t$ is given as part of the input.
	This improves the previously best running time for Max-Bisection of \textcite{DBLP:journals/tcs/HanakaKS21} by a factor $t^2$.
	Under common
	hardness assumptions, neither the dependence on $t$ in the exponent nor the dependence on $n$
	can be reduced~\cite{DBLP:journals/tcs/HanakaKS21,DBLP:journals/jcss/EibenLM21,DBLP:journals/talg/LokshtanovMS18}.
	Our framework
	can be applied to other cut problems like 
	Min-Edge-Expansion, Sparsest-Cut, Densest-Cut, $\beta$-Balanced-Min-Cut,
	and Min-Bisection.
	It also works in the setting with arbitrary weights and directed edges. 
	\end{abstract}
	
	\section{Introduction}
	\label{sec:introduction}
	Unweighted Max-Cut is one of Karp's 21 NP-complete problems~\cite{Karp1972};
	given a graph $G = (V,E)$ one is asked for a set $S \subseteq V$ such that the number
	of edges from $S$ to $V \setminus S$ is maximal.
	Formally, a \emph{cut} is determined
	by a set of vertices $S \subseteq V$ of a graph. The \emph{size of a cut} is given by
	the number of edges from $S$ to $V \setminus S$. We denote these
	edges as $\partial S$ and, for the sake of shortness, if $S = \sing v$ for some $v$, 
	we write $\partials v$ instead of $\partial S = \partial \sing v$.
	If the graph is weighted,
	the size of the cut is given by the sum of the edge weights
	$\alpha(\partial S) \defeq \sum_{e \in \partial S} w(e)$ instead of their number
	$\abs{\partial S}$.
	In this paper we consider different cut problems for directed and weighted graphs\footnote{The undirected and unweighted versions can easily be modelled as directed
	and weighted by setting each edge weight to $1$ and by replacing each undirected edge
	between vertices $v_1$ and $v_2$ by two directed edges, $v_1v_2$ and $v_2v_1$.},
	more precisely Max-Cut, $\beta$-Balanced-Min-Cut, Max-Bisection, Min-Bisection, Min-Edge-Expansion, (uniform) Sparsest-Cut, and Densest Cut.
	See \autoref{table:objectives} for precise formulations of these
	problems.

	Observe that Densest-Cut and Sparsest-Cut can easily be reduced
	on each other in time $\O(n^2)$
	(where $n = \abs V$)
	using the complementary graph~\cite{DBLP:journals/jda/BonsmaBPP12}; however, this reduction might change the treewidth and the corresponding
	decomposition, hence we have to consider both problems individually.
	\begin{table}
		\centering
		\begin{tabulary}{\textwidth}{|LCC|} 
			\hline
			Name & Weights & Objective 
			\\\hline\hline
			\rule{0pt}{1.5em}Max-Cut&arbitrary& $\displaystyle \max_{S \subseteq V} \alpha(\partial S)$ \\ \rule{0pt}{2em}$\beta$-Balanced-Min-Cut~\cite{DBLP:journals/ipl/FeigeY03} & non-negative &$\displaystyle \min_{\substack{S \subseteq V \\ \beta \cdot \abs V \leq \abs S \leq (1 - \beta) \cdot \abs V}} \hskip -2em \alpha(\partial S)$ \\ 
			\rule{0pt}{2em}Max-Bisection~\cite{DBLP:journals/siamcomp/JansenKLS05}&non-negative& $\displaystyle  \max_{\substack{S \subseteq V \\ \abs[\big]{\abs S - \abs{V \setminus S}} \leq 1 }}
			\alpha(\partial S)$ \\ \rule{0pt}{2em}Min-Bisection~\cite{DBLP:journals/siamcomp/JansenKLS05}&non-negative& $\displaystyle \min_{\substack{S \subseteq V \\ \abs[\big]{\abs S - \abs{V \setminus S}} \leq 1 }}
			\alpha(\partial S)$\\ \rule{0pt}{2em}Min-Edge-Expansion~\cite{mahoney}&non-negative& $ \displaystyle \min_{\substack{\emptyset \neq S \subseteq V \\ \abs S \leq \abs{V \setminus S}}}
			\frac{\alpha(\partial S) }{\abs S} $ \\ \rule{0pt}{2em}Sparsest-Cut~\cite{DBLP:journals/jda/BonsmaBPP12}&non-negative& $\displaystyle  \min_{\emptyset \neq S \subsetneq V}
			\frac{\alpha(\partial S) }{\abs S \cdot \abs{V \setminus S}}$\\ \rule[-1.5em]{0pt}{3.5em}Densest-Cut~\cite{DBLP:journals/jda/BonsmaBPP12}&non-negative& $\displaystyle \max_{\emptyset \neq S \subsetneq V}
			\frac{\alpha(\partial S) }{\abs S \cdot \abs{V \setminus S}}$ \\ \hline
		\end{tabulary}
		\caption{Problems that we solve in quadratic time.}
		\label{table:objectives}
	\end{table} 
	We want to point out that if negative edge weights are allowed, as they are in our algorithm,
	Max-Bisection and Min-Bisection coincide. This does not hold for Max-Cut and its
	corresponding minimization variant; Min-Cut is solvable in polynomial time.
	We call those more general variants
	of the problem, where we get rid of the non-negativity restrictions,
	Max-Bisection', $\beta$-Balanced-Min-Cut', Min-Edge-Expansion', Sparsest-Cut', and Densest-Cut'. As we
	will see, our framework is able to solve these more general variants of the
	problems.

	Many graph problems are in FPT if parametrized by treewidth. This holds
	especially for the problems mentioned above~\cite{DBLP:journals/jda/BonsmaBPP12,DBLP:journals/tcs/HanakaKS21,DBLP:journals/jcss/EibenLM21,DBLP:journals/corr/abs-1910-12353}.
	The corresponding algorithms
	usually assume that a tree decomposition with $\O(n)$ (or a similar bound like $\O(nt)$)
	nodes 
	of width $t$ is given as part of the input;
	we will later see why that is a reasonable assumption and what we can do if a tree
	decomposition is not given.
	We show that in the above setting, all aforementioned problems can be solved in
	time $\O(2^tn^2)$.

	\section{Related Work} \textcite{DBLP:journals/siamcomp/JansenKLS05}
	proposed an algorithm for Max-Bisection and Min-Bisection that runs in time
	$\O(2^t n^3)$ for a graph with $n$ vertices, given a tree decomposition of width $t$ with $\O(n)$ nodes\footnote{The upper bound on the number of nodes occurs only
	implicitly in their work within the analysis of their algorithm's running time.}.
	They transform the tree decomposition into a so-called
	\emph{nice tree decomposition}~\cite{Kloks1994TreewidthCA}
	and then formulate a dynamic program over
	the nodes of the tree decomposition. The bottleneck of their analysis are nodes that have more than one child,
	the so-called \emph{join nodes}. There might be $\Omega(n)$ join nodes and for each
	the dynamic program might take time $\Omega(2^tn^2)$ to compute all the entries.
	\textcite{DBLP:journals/jcss/EibenLM21} have been able to improve the running time\footnote{To be precise, they consider Min-Bisection; however, in their paper as well
	as in~\cite{DBLP:journals/siamcomp/JansenKLS05}, all arguments work for both problems.}
	in its dependence on $n$ by balancing the tree decomposition and recognizing that
	the entries in join nodes can be computed via $(\max,+)$-convolution; this yields
	a running time of $\O(8^t t^5 n^2 \log n)$. \textcite{DBLP:journals/tcs/HanakaKS21} proved that
	the algorithm of \citeauthor{DBLP:journals/siamcomp/JansenKLS05} does in fact run
	in time $\O(2^t(nt)^2)$, using a clever idea to improve the analysis: while for a single
	join node computing all entries of the dynamic program might take $\Omega(2^tn^2)$, the 
	overall time for all join nodes altogether is $\O(2^t(nt)^2)$.

	\textcite{DBLP:journals/talg/LokshtanovMS18} proved that Max-Cut (without a tree decomposition given as part of the input) cannot be solved in time $\O^( (2-\eps)^t \poly n)$
	for any $\eps > 0$
	assuming the \emph{Strong Exponential Time Hypothesis (SETH)}~\cite{DBLP:journals/jcss/ImpagliazzoP01,DBLP:conf/iwpec/CalabroIP09}.
	It is not hard to see that this result can be extended to the case where a tree
	decomposition is part of the input -- for the sake of completeness, we include
	the corresponding proof in the appendix.
By adding isolated vertices, this result can also be applied to Max-Bisection
	and Min-Bisection~\cite{DBLP:journals/tcs/HanakaKS21}.
\textcite{DBLP:journals/jcss/EibenLM21} proved that Min-Bisection (and hence Max-Bisection')
	cannot be solved in truly subquadratic time, that is $\O(n^{2 - \eps})$ 
	for some $\eps > 0$, even if a tree decomposition of width $1$ is given as part of the input, unless
	$(\min,+)$-convolution can be solved in truly subquadratic time, which is considered
	unlikely~\cite{cygan_2017}. 

	Given a tree decomposition of width $t$ with $\O(nt)$ nodes, in time
	$\O(2^tn^3)$ the problems Sparsest-Cut~\cite{DBLP:journals/jda/BonsmaBPP12},
	Densest-Cut~\cite{DBLP:journals/jda/BonsmaBPP12}, and Min-Edge-Expansion~\cite{DBLP:journals/corr/abs-1910-12353} can be solved. To our knowledge,
	those are the best running times achieved so far.

	\para{Our Contribution \& Organization Of This Paper}
	In \autoref{sec:bisection} we show how to improve  the running time by a factor $t^2$ for Max-Bisection. In \autoref{sec:framework} we then generalize this to a framework which can be used
	to solve different cut problems in time $\O(2^tn^2)$ (compare \autoref{table:objectives}).
	Some problems (like Sparsest Cut) are improved by a factor $n$, which is substantial when $t$ is small.\ifsubm{
		The instantiations of our framework together with the corresponding correctness proofs
are ommited due to space restrictions. We refer to the \fullversion\ of the paper.
	}{
		In \autoref{sec:instantiations} we eventually
		show how the framework can be used to solve
		aforementioned problems, including the proofs
		that the corresponding models are correct.
	}

	\section{Preliminaries} \label{sec:prelim}

	\mysec{Notation} For tuples we write $(a,b) \oplus (c,d) = (a + c, b + d)$
	(and analogously define $\ominus$). We write $\pi$ to denote the projection on tuples,
	that is: for a tuple $t$ and an index $i$, $\pi_i(t)$ is the $i$-th component of $t$.

	For a set $M$ and a number $k$ we write
	$\binom M k = \set{M' \subseteq M | \abs M = k}$.
	For a graph edge
	from $v_1$ to $v_2$ we write $v_1v_2$ for both, directed and undirected
	graphs. A rooted tree $T = (V,r,E)$ is a graph $(V,E)$ that is connected, has no circles, and where $r \in V$. 

	We use $n$ as abbreviation for $\abs V$.
	By $\bar \Q \defeq \Q \uplus \sing{\infty, -\infty}$ we denote the rational
	numbers with positive and negative infinity.

	As already mentioned in the previous sections,
	we make use of \emph{tree decompositions}, which are defined as follows:
	\begin{definition}[Tree decomposition]
		\label{defi:tree-decomposition}
		Let $G = (V,E)$ be an undirected graph,
		$I$ be a finite set and $X = (X_i)_{i \in I}$ be a family
		of sets such that for any $i \in I$ one has $X_i \subseteq V$.
Moreover, let $T = (I, r, H)$ be a rooted tree with root $r \in I$.
		Then $(I, r, X, H)$ is called a
		\emph{tree decomposition} of $G$ if and only if $T$ has the
		following properties:
		\begin{enumerate}[(i)]
			\item \label{defi:tree-decomposition:node_coverage}
				\textbf{Node coverage}: Every vertex occurs in some $X_i$ for some $i \in I$
				(and no further vertices occur): $\bigcup_{i \in I} X_i = V$;
			\item \label{defi:tree-decomposition:edge_coverage}
				\textbf{Edge coverage}: for every edge $v_1v_2 \in E$ there is a node
				$i \in I$ such that both, $v_1$ and $v_2$, are contained in $X_i$\semi
\item \label{defi:tree-decomposition:coherence}
				\textbf{Coherence}: for every vertex $v \in V$ the subgraph $T - \set{ i \in I | v \notin X_i}$ is connected\semi.
		\stopp\end{enumerate}
	\end{definition}

	\noindent By convention
	the nodes of the graph $G$ are called
	\emph{vertices} while the nodes of the tree are
	just called \emph{nodes}. For a node $i \in I$
	the set
	$X_i$ is called a \emph{bag}. The \emph{width} of a decomposition is the largest cardinality of any of its bags minus 1. The minimum width among all decompositions
	of $G$ is called the \emph{treewidth} of $G$.
	If the node set of a decomposition
	is sufficiently small, more precisely if $\abs I \leq 4 \cdot (\abs V + 1)$, we
	call the decomposition \emph{small}.
	A tree decomposition of a directed graph is a tree decomposition of the
	underlying graph.

	As the approaches mentioned before, our approach also makes use of a specific kind
	of tree decompositions, which are of a very simple structure:
	\begin{definition}[Nice Tree Decomposition]
		Let $G = (V,E)$ be an undirected graph and
		$(I, r, X, H)$ a tree decomposition of $G$.
		We call $(I, r, X, H)$ a \emph{nice tree decomposition}
		if and only if
		for any $i \in I$ the node $i$ is of one of the following forms:
		\begin{enumerate}[(i)]
			\item \textbf{Leaf node}: $i$ has no child node in $T$,
			that is $i$ is
				a \emph{leaf} of the tree $T$\semi

			\item \textbf{Forget node}: $i$ has exactly one child node $j \in I$
				in $T$
				and $X_i \uplus \sing v = X_j$ for some $v \in X_j$, that is $i$
				\emph{forgets} a vertex from $X_j$\semi
					
			\item \textbf{Introduce node}: $i$ has exactly one child $j \in I$
				in $T$ and $X_i = X_j \uplus \sing v$ for some
				$v \in V \setminus X_j$, that is $i$
				\emph{introduces} a new vertex $v \in V \setminus X_j$\semi|

			\item \textbf{Join node}: $i$ has exactly two child nodes
				$j \in I$, $k \in I$, $j \neq k$,
				in $T$
				such that $X_i = X_j = X_k$,
				that is 
				\emph{joining} two branches of the tree $T$\semi.
			\stopp \end{enumerate} 
	\end{definition}

	\noindent The conversion of a tree decomposition into a nice tree decomposition
	can be done in time $\O(nt^2)$ as long as the
	number of nodes of the decomposition is at most linear in the 
	number of vertices.

	\begin{lemma}[{\cite[\Lemma{13.1.3}, \page{150}]{Kloks1994TreewidthCA}}]
		\label{stmt:nice}
		Given a small tree decomposition of a graph $G$ with width
		$t$
		one can
		find a nice tree decomposition of $G$ with width $t$ and
		with at most $4n$ nodes in $\O(nt^2)$ time,
		where $n$ is the number of vertices of $G$.
		\stopp
	\end{lemma}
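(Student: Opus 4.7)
The plan is to convert the given decomposition into a nice one by a sequence of standard local rewrites, and then account for both the node count and the running time via a charging argument.

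I would proceed in two passes. In the \emph{binarization pass} I walk through the tree and replace every node with $k > 2$ children by a small binary sub-tree of $k-1$ duplicate copies of its bag, so that every surviving internal node branches at most binarily. Since the input is small, i.e.\ has at most $4(n+1)$ nodes, this pass only adds $\O(n)$ nodes and costs $\O(nt)$ time for copying bags. In the \emph{chain-insertion pass} I process every remaining tree edge $(i, j)$ with $X_i \neq X_j$: I first insert $\abs{X_i \setminus X_j}$ forget nodes that remove the vertices of $X_i \setminus X_j$ one by one, and then $\abs{X_j \setminus X_i}$ introduce nodes that add the vertices of $X_j \setminus X_i$ one by one. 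Computing the two set differences on bags of size $\O(t)$ costs $\O(t^2)$ per edge, so this pass runs in $\O(nt^2)$ total. After both passes every node has one of the four prescribed forms, and since every intermediate bag is a subset of $X_i \cup X_j$ the width stays $t$; the overall running time is $\O(nt^2)$ as claimed.

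The delicate point is the $4n$ bound on the final number of nodes, since a naive summation over tree edges of chain lengths $\O(t)$ only yields an $\O(nt)$ bound. To obtain the linear bound I would charge each forget node to the unique topmost occurrence in the rooted decomposition of the vertex it forgets (which is well-defined by the coherence property), giving at most $n$ forget nodes; the smallness assumption together with binarization bounds the number of join and leaf nodes by $\O(n)$; and a symmetric vertex-based charging controls the number of introduce nodes. Making this charging tight -- in particular organising the tree so that each vertex contributes only a bounded number of introduce nodes rather than $\O(t)$ -- is the combinatorial heart of the lemma and the step I expect to require the most care; it is exactly the counting argument carried out by Kloks in the cited reference.
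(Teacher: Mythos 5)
First, note that the paper does not prove this lemma at all: it is imported verbatim from Kloks (Lemma 13.1.3) and used as a black box, so there is no internal proof to measure your attempt against. Your construction --- binarise the branching nodes by duplicating bags, then interpolate along each remaining tree edge with a chain of forget nodes followed by introduce nodes --- is indeed the standard construction behind Kloks' lemma, and your width and running-time accounting is fine (with the small correction that each intermediate bag on a chain is a subset of $X_i$ or of $X_j$, not merely of $X_i \cup X_j$; the latter would only bound the width by $2t+1$).

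The genuine gap is exactly where you flag it, and the charging scheme you sketch does not close it. Forgetting really is a once-per-vertex event: by coherence the node at which a vertex disappears on the way to the root is unique, so there are at most $n$ forget nodes. Introduction is \emph{not} symmetric to this. A vertex $v$ is introduced once at every minimal node of the subtree $T_v$ of bags containing $v$ (every non-leaf node $i$ with $v \in X_i$ all of whose children avoid $v$), and after binarisation the number of such nodes is bounded only by one plus the number of join nodes lying inside $T_v$. Summing over all vertices, the natural vertex-based charge therefore yields at most $n + \sum_{i \,:\, \text{join}} \abs{X_i}$, which can be as large as $\Theta(nt)$ when there are $\Theta(n)$ join nodes with bags of size $t+1$ --- i.e.\ exactly the $\O(nt)$ bound you set out to beat. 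Getting $4n$ (or even $\O(n)$ with a constant independent of $t$) needs an additional idea, for instance exploiting that a vertex already present at the minimal nodes of $T_v$ that are leaves of the tree need never be introduced there at all, which is where the paper's definition of nice decompositions (arbitrary leaf bags, unlike Kloks' singleton leaves) becomes essential. Deferring this count to ``the counting argument carried out by Kloks'' is a restatement of the citation rather than a proof, and since this count is the only nontrivial content of the lemma --- and the linear node bound is what the paper's $\O(2^t n^2)$ analysis later relies on --- the proposal as written does not establish the statement.
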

	It is reasonable to assume that given tree decompositions are small for the following
	reason: No matter how the tree decomposition is constructed, it is always possible
	to incorporate the following mechanism without asymptotically increasing the running
	time:
If a node $j$ with parent $i$ has $X_j \subseteq X_i$, merge
		those nodes.

	We claim that we now can only have $n$ edges. This is because for every node $j$
	with parent $i$, we have $X_j \not \subseteq X_i$; this means, at least one
	vertex has to disappear when going from $j$ up to $i$. Since by \hyperref[defi:tree-decomposition:coherence]{Coherence} every
	vertex can disappear at most once\footnote{
		It might disappear on multiple leaf-root-paths; however, the node at which
		a vertex disappears, is the same on each of those leaf-root-paths.
	}, this upper bounds the number of edges by $n$ and
	hence the number of nodes by $n+1$.

	Note that this also allows us to easily extend our approach to the case where
	a tree decomposition with $\O(nt)$ nodes is given as part of the input: simply
	apply the procedure described above to reduce the number of nodes.

\section{Max-Bisection: From $\O(2^t n^3)$ to
	$\O(2^tn^2)$} \label{sec:bisection}
	In this section we focus on our idea on how the running time of the algorithm
	of \textcite{DBLP:journals/siamcomp/JansenKLS05} for Max-Bisection can be improved
	to $\O(2^tn^2)$, incorporating the idea of \textcite{DBLP:journals/tcs/HanakaKS21}.

	The algorithm of \textcite{DBLP:journals/siamcomp/JansenKLS05} is a subroutine
	used in their PTAS for the Max-Bisection problem on planar graphs. Their approach
	uses Baker's technique (see \cite{10.1145/174644.174650}) where the idea is
	to solve the problem for $k$-outerplanar graphs (instead of general planar graphs),
	for a $k$ depending only on
	the approximation factor, and then combining the results.
	Note that $k$-outerplanar graphs have a treewidth in $\O(k)$~\cite{DBLP:journals/corr/abs-1301-5896}.
	For those $k$-outerplanar graphs, the problem is solved exactly using the
	aforementioned subroutine. Since -- as opposed to the 
	general case -- tree decompositions for $k$-outerplanar graphs can be computed
	in time $\O(kn)$~\cite{DBLP:journals/corr/abs-1301-5896}, this
	subroutine gets tree decomposition as part of its input; otherwise the running
	time of the subroutine
	would be dominated by the computation of the decomposition.
	
	Let us now focus on the subroutine.
	We will traverse the nice tree decomposition bottom up in the algorithm
	of \citeauthor{DBLP:journals/siamcomp/JansenKLS05}, hence the following notations come
	in handy: For a node $i$ the set $Y_i$ contains all the vertices appearing in bags associated with nodes below $i$.
Moreover, we write $F_i \defeq Y_i \setminus X_i$
	to describe the set of vertices that \enquote{have been forgotten}
	somewhere	below $i$, that is, that they have appeared in bag of some node $j$ below $i$,
	but are not contained in $X_i$. Due to \subref{defi:tree-decomposition}{coherence}, those vertices can never reoccur
	in any bag above $i$.

	\let\oldGamma=\Gamma \renewcommand\Gamma{\mathcal B}
	The algorithm of \citeauthor{DBLP:journals/siamcomp/JansenKLS05} uses
	a dynamic program to compute
	\begin{align} \label{eq}
		& \Gamma_i \colon \set{0, \dots, \abs {Y_i}} \times 2^{X_i} \to \bar\Q_{> 0} \notag
		\\
		& \Gamma_i(\ell, S) = \max_{\substack{
			\hat S \subseteq Y_i 
			\\
			\abs{\hat S} = \ell
			\\
			S \subseteq \hat S
		}} \alpha(\partial \hat S \cap Y_i^2),
	\end{align}
	given a small nice tree decomposition of a weighted, undirected graph $G = (V,E)$
	with weight function $w \colon E \to \Q_{> 0}$. The idea is that for a node $i$
	the entry $\Gamma_i(\ell, S)$ is the size of the largest possible cut that consists
	of $\ell$ vertices from $Y_i$ and includes the set $S \subseteq X_i$.
	As the table might have preimages $(\ell, S)$ where there does not exist be a cut meeting
	the requirements above, $\infty$ and $-\infty$ have to be used to deal with
	those -- we omit further details. 
	
	For the root $r$ of the tree decomposition we can compute our optimal objective
	value by iterating over all entries of $\Gamma_r$ and picking the best value where
	the number of
	vertices is in the feasible interval for Max-Bisection.

	The dynamic program traverses
	the decomposition bottom up. The bottleneck of the running time comes from the time
	spent at join nodes -- the values for each node of any different type can be computed
	in time $\O(2^tn)$ using a simple DP.
	For a join node $i$ with left child $j$ and right child $k$,
	they use the following recurrence to compute $\Gamma_i$:
	\begin{equation} \label{eq:join_jansen}
		\Gamma_i(\ell, S) = \max_{\substack{
			\abs S \leq \ell_1 \leq \abs  {V} \\
			\abs S \leq \ell_2 \leq \abs {V} \\
			\ell_1 + \ell_2 - \abs S = \ell
		}} \tuplike{
			\Gamma_j(\ell_1, S) + \Gamma_k(\ell_2, S)
			- \alpha(\partial S \cap (S \times (X_i \setminus S))) 
		}
	\end{equation}
	We omit the details on how the necessary values of $\alpha$ are computed in their case;
	it suffices to see that a rough analysis of the equation above, assuming that we are given
	the value of the $\alpha$ expression, is $\O(n^2)$ per entry for a single join node.
	It is also easy to see that there are indeed instances where the computation of an entry
	for $(\ell, S)$ takes $\Theta(n^2)$ time. This yields an overall running time
	of $\O(2^tn^3)$ as stated by \textcite{DBLP:journals/siamcomp/JansenKLS05}.

	\textcite{DBLP:journals/tcs/HanakaKS21} provided an improved analysis for the algorithm of \textcite{DBLP:journals/siamcomp/JansenKLS05}.
	They defined $\nu_i$ to be the sum of all $\abs{X_j}$ for nodes $j$ that 
	are below $i$. It is not hard to see that the running time for the computation
	of a single entry of a join node $i$ with left child $j$ and right child $k$
	can be done in time $\O(\nu_j\nu_k)$. Using a labeling argument they then proved
	that
	\[ \sum\limits_{i ~:~ \text{join node with children $j$,$k$}}\nu_j\nu_k \leq (nt)^2. \]
	Their idea is that after labeling the vertices in all bags (possibly giving the same vertex 
	different labels for different bags), every pair of those labels can occur at at most
	one join node. The consequence of the above statement is that the worst case for join nodes
	cannot occur too often; overall, all entries of all join nodes can be computed in time
	$\O(2^t(nt)^2)$.

	Our approach is now to reformulate the recurrence by something that can be thought of
	as an index shift; for each node $i$ we define a table
	\let\Gamma=\oldGamma
	\begin{align} \label{eq:Shift}
		& \Gamma_i \colon \sing{0, \dots, \abs {F_i}} \times 2^{X_i} \to \Q
		\notag
		\\
		&\Gamma_i(\ell, S) \defequiv \max_{
			\hat S \in \binom {F_i} \ell 
		} \alpha\enpar{\partial \enpar{S \uplus \hat S} \cap Y_i^2}
	.\end{align}
	In comparsion to \eqref{eq}, there are two differences.
	\begin{enumerate}
		\item The indices have a different meaning: In $\Gamma_i(\ell, S)$ we store
		the value of the best cut (with respect to $f$) that consists of the set
		$S \subseteq X_i$ and $\ell$ further vertices that
		\emph{occur in bags below $i$, but not in $X_i$}.
		\item The table's size is now $\O(\abs {F_i} 2^t)$; this is not only
		smaller but also for every entry $(\ell, S)$ there is a cut consisting
		of $\ell$ vertices from $F_i$ and the vertices from $S$ (hence we do not
		have to consider those special cases of undefinedness as it had to be done
		in \cite{DBLP:journals/siamcomp/JansenKLS05}).
	\end{enumerate}
	For the modified recursion, the join nodes are still the bottleneck; their recurrence is\footnote{Since we only did an index shift, we can reuse the recurrence from \cite{DBLP:journals/siamcomp/JansenKLS05} by applying the same shift to it}
	\[
		\Gamma_i(\ell, S) \equiv \bigg(\max_{\substack{
			0 \leq \ell_1 \leq \abs{F_j} \\
			0 \leq \ell_2 \leq \abs{F_k} \\
			\ell_2 + \ell_1 = \ell
		}}
		\tuplike{
			\Gamma_j(\ell_1, S)
		+
			\Gamma_k(\ell_2, S)
		}
		\bigg) - \alpha(\partial S \cap Y_i^2).
	\]
	The key observation is that the running time is dominated by computing the $\max$ expression\footnote{
		Note that $\partial S$ can only take on $\O(2^t)$ different values at some fixed node
		$i$. We use a simple DP to precompute the $\alpha(\cdot)$ terms efficiently (for a fixed node in time $\O(2^tn)$).
	}, which depends linearly on $\abs{F_j} \cdot \abs {F_k} = \abs{F_j \times F_k}$. We can now show that
	all of those occurring Cartesian products are disjunct:
	\begin{proposition} \label{stmt:pairs}
		For each pair $(v_1, v_2) \in V^2$ there is at most one join node
		$i$ with left child $j$ and right child $k$ such that $(v_1, v_2) \in
		F_j \times F_k$. \stopp
	\end{proposition}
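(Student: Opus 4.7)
The plan is to exploit axiom \itemref{defi:tree-decomposition:coherence} to localise, for every vertex $v$, the set $A_v = \{i \in I \mid v \in X_i\}$ of nodes whose bag contains $v$. Coherence guarantees $A_v$ induces a connected subtree of $T$, and node coverage \itemref{defi:tree-decomposition:node_coverage} guarantees $A_v \neq \emptyset$. The key preliminary observation I would establish is: whenever $v \in F_i$, the vertex $v$ occurs in some bag strictly below $i$ while $v \notin X_i$, so $A_v$ stays clear of $i$ and, by connectedness, also of every ancestor of $i$; hence $A_v$ lies entirely among the strict descendants of $i$.

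For the proposition itself I would argue by contradiction: suppose two distinct join nodes $i$ (with left child $j$ and right child $k$) and $i'$ (with left child $j'$ and right child $k'$) both witness $(v_1, v_2) \in F_j \times F_k$ and $(v_1, v_2) \in F_{j'} \times F_{k'}$. Write $B_x$ for the node set of the subtree of $T$ rooted at $x$. By the preliminary observation, $\emptyset \neq A_{v_1} \subseteq (B_j \setminus \{j\}) \cap (B_{j'} \setminus \{j'\})$ and $\emptyset \neq A_{v_2} \subseteq (B_k \setminus \{k\}) \cap (B_{k'} \setminus \{k'\})$. Any two subtrees of a rooted tree that share a node must be comparable under the ancestor order, so in particular $i$ and $i'$ must be comparable; I would then assume WLOG that $i'$ is a proper descendant of $i$, which places $i'$ in exactly one of the two disjoint subtrees $B_j$ and $B_k$.

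Say $i' \in B_j$ (the case $i' \in B_k$ is symmetric, swapping the roles of $v_1$ and $v_2$). Then $B_{k'} \subseteq B_{i'} \subseteq B_j$, so $A_{v_2} \subseteq B_{k'} \subseteq B_j$. The observation also gives $A_{v_2} \subseteq B_k$, and since $j, k$ are distinct children of the same node $i$, the subtrees $B_j$ and $B_k$ are disjoint. This forces $A_{v_2} = \emptyset$, contradicting node coverage, and hence $i = i'$. The step I expect to need the most care is stating the confinement of $A_v$ precisely so that it simultaneously rules out $i$ and all ancestors of $i$; once this lemma is in place, the subtree bookkeeping above is brief.
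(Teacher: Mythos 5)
Your proof is correct and follows essentially the same route as the paper's: argue by contradiction, use coherence to show the two join nodes are comparable, place the lower one in one of the two child subtrees of the upper one, and derive a contradiction from the disjointness of those subtrees. Your formulation via the connected node sets $A_v$ and subtree node sets $B_x$ makes explicit the coherence facts that the paper invokes only implicitly (comparability of the join nodes and the containment/disjointness of the forgotten sets), and your confinement lemma for $A_v$ is exactly the right auxiliary statement.
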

	\begin{proof}
		Proof by contradiction. Assume there was a join node $i' \neq i$ with left child
		$j'$ and right child $k'$ such that $(v_1, v_2) \in F_{j'} \times F_{k'}$.
		Then, by \hyperref[defi:tree-decomposition:coherence]{Coherence},
		either $i'$ is below $i$ or $i$ is below $i'$. We assume
		without loss of generality that $i'$ is below $i$. Moreover, we assume
		without loss of generality that $i'$ is somewhere in the \emph{left}
		subtree of $i$. As $(v_1, v_2) \in F_{j} \times F_{k}$ by assumption,
		we have in particular $v_2 \in F_k$, and, additionally taking into
		account that $F_j \cap F_k = \emptyset$ by \hyperref[defi:tree-decomposition:coherence]{Coherence}, $v_1 \notin F_k$.
		Since $i'$ is in the left subtree of $i$, we also have $F_{j'} \subseteq F_{i'}
		\subseteq F_j$, hence $v_1 \notin F_{j'}$. This is a contradiction to
		$(v_1, v_2) \in F_{j'} \times F_{k'}$.
	\end{proof}\noindent
	Using this statement we can now deduce that 
	\begin{equation} \label{eq:nn}
		\sum\limits_{i~:~\text{join node with children $j$,$k$}} \hskip -1em \abs{F_j \times F_k}
		= \abs*{\biguplus\limits_{i~:~\text{join node with children $j$,$k$}} \hskip -1em \big(F_j \times F_k)}
		\leq \abs{V^2} = n^2.
	\end{equation}

	We can thus deduce the overall running time of computing all entries for all
	join nodes is $\O(2^t n^2)$, as we need time $\O(2^t \abs{F_j \times F_k})$
	for a single join node.

	As the running time for the other node types obviously remain unchanged, this yields an algorithm with overall running time $\O(2^tn^2)$
	for Max-Bisection.

	\section{Our Framework} \label{sec:framework}
	\let\Gamma=\oldGamma
	In this section we discuss how we can generalize the idea from the previous section to other cut problems.
	More precisely, we present a framework that can solve several cut-problems
	(for directed, arbitrarily-weighted graphs $G = (V,E)$ with weight function $w : E \to \Q$) 
	in time $\O(2^tn^2)$ if a small tree decomposition of width $t$ is given
	as part of the input.
	Without loss of generality we assume that this small tree decomposition is also a nice
	tree decomposition (if not, we could simply use \autoref{stmt:nice} to convert it 
	accordingly in sufficiently small time).

	The main obstacle is finding an abstraction of the algorithm for Max-Bisection
	that maintains the running time, but also allows us tackle all the listed problems.
	Especially extracting the formal arguments hidden implicitly in existing algorithms
	turned out to be a non-trivial task.

	For our framework, we assume that we are given an objective function
	$f \colon \N_0 \times \Q \to \bar \Q$ that is either monotonic or antitonic\footnote{$x \leq y \implies f(a,x) \geq f(a,y)$}
	in its second argument,
	and a validator function $\valid \colon \N_0 \to \sing{\true, \false}$.
	We use a 
	function of arity 2 to be able to not only model Max-Bisection and similar, but also
	e.\@{}g.\@{} Sparsest-Cut, where the objective depends on the size of the cut
	\emph{and} the number of vertices selected. The validator function is needed 
	e.\@{}g.\@{} for Max-Bisection, as we somehow have to tell the framework which
	entries correspond to feasible solutions and which are infeasible; for Max-Bisection
	we would set $\valid(x) \defeq (\abs{x - (n - x)} \leq 1)$.
We assume that both, $f$ and $\valid$, can be evaluated in time $\O(1)$.

	Our task is now to compute an element of all \emph{possible preimages} 
	(in the sense: there exists a corresponding cut) 
	$\Set{ \tup{ \abs S }{ \alpha(\partial S) } | S \subseteq V,\,\valid(\abs S)}$
	of the objective function $f$ that maximizes $f$.
	We can reformulate this task in a more elegant way by introducing the total
	quasiorder
	$(\sqleq)\subseteq (\sing{0, \dots, n} \times \Q) \uplus \sing \bot)^2$ 
	defined by
	\[a \sqleq b \iff a = \bot \lor \big(
			a \neq \bot \neq b \land f(a) \leq f(b)
		\big). \]
	The intuition behind that quasiorder is as follows: if we compare two values $a,b \neq \bot$, then $a \sqleq b$ iff $f(a) \leq f(b)$, that is,
	we compare (non-$\bot$) values by their image under $f$.

	The idea of the new symbol $\bot$ is to represent the case where there is no
	feasible solution and hence no possible preimage to $f$ as we have to deal
	with that case, too.

	Our task is now to compute (where $\bigsqcup$ is the supremum operator regarding $\sqleq$).
	\begin{equation} \label{eq:phi1}
		\Phi \equiv
			\bigsqcup_{\substack{
				S \subseteq V \\
				\valid(\abs S)
		}}
		\tup{
			\abs S
		}{
			\alpha(\partial S)
		}.
	\end{equation}
By definition, $\bot$ is the smallest element of our order,
	so if there is a feasible solution, the result cannot be $\bot$.

	From a strict mathematical perspective, \autoref{eq:phi1} is incorrect as
	in general there is no such thing as a unique supremum for a total quasiorder
	(there might be multiple possible preimages of $f$ taking on the optimal value).
	Taking this into account would make the description of our approach way more complicated,
	as we would need to reason about equivalence classes and eventually give a
	recurrence to compute a representant of the class of element optimizing the objective function. Thus, we identify elements and their corresponding equivalence class (set of possible preimages that have the same objective value) in this paper.

	For the sake of shortness, for a subset of edges $M \subseteq E$ we write $\alpha_i(M) \defeq \alpha(M \cap Y_i^2)$.
	We set up the dynamic program similar to the one for Max-Bisection, that is for a node $i$ we define
	\begin{align} \label{eq:Gamma}
		& \Gamma_i \colon \sing{0, \dots, \abs {F_i}} \times 2^{X_i} \to \sing{0, \dots, n} \times \Q
		\notag
		\\
		&\Gamma_i(\ell, S) \defequiv \bigsqcup_{
			\hat S \in \binom {F_i} \ell 
		} \tup{\ell + \abs S}{\alpha_i \enpar{\partial \enpar{S \uplus \hat S}}}
	.\end{align}
	In comparison to \autoref{eq:Shift} there are two differences.
	\begin{enumerate}
		\item An entry is no longer just the size of the corresponding cut,
		but a $2$-tuple consisting of the number of vertices selected and
		the size of the cut.
		\item Instead of storing values for the largest cut, as we did for Max-Bisection,
		we store the tuple that	maximizes the function $f$.
	\end{enumerate}
We can now rewrite \autoref{eq:phi1} in terms of $\Gamma$ (recall that $r$
	is the root of the given nice tree decomposition):
	\begin{lemma}
		\label{stmt:rewrite}
		\begin{equation}
			\Phi \equiv \bigsqcup_{\substack{
			S \subseteq X_r \\
			0 \leq \ell \leq \abs {F_r} \\
			\valid(\ell + \abs S)
		}}
		\Gamma_r(\ell, S) 
		\end{equation}
	\end{lemma}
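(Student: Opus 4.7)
The plan is to exhibit a bijection between the subsets $S' \subseteq V$ over which $\Phi$ ranges and the triples $(\ell, S, \hat S)$ with $S \subseteq X_r$, $0 \leq \ell \leq |F_r|$, and $\hat S \in \binom{F_r}{\ell}$ appearing inside $\Gamma_r(\ell,S)$, and to verify that this bijection preserves both the validator and the objective tuple $(|\cdot|, \alpha(\partial \cdot))$.

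First I would observe that, because $r$ is the root of the tree decomposition, the node coverage property \subref{defi:tree-decomposition}{node_coverage} yields $Y_r = V$. Consequently $\alpha_r(M) = \alpha(M \cap V^2) = \alpha(M)$ for every $M \subseteq E$, and $F_r = V \setminus X_r$, so $V = X_r \uplus F_r$ is a disjoint partition. This partition gives a canonical bijection
\[
2^V \;\longleftrightarrow\; \bigcup_{\ell = 0}^{|F_r|} \{S \subseteq X_r\} \times \binom{F_r}{\ell},
\qquad S' \;\longmapsto\; \bigl(S' \cap X_r,\; S' \cap F_r\bigr),
\]
with inverse $(S, \hat S) \mapsto S \uplus \hat S$, where $\ell$ is determined as $|\hat S|$.

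Under this correspondence, $|S'| = |S| + |\hat S| = |S| + \ell$, so the validity predicate $\valid(|S'|)$ matches $\valid(|S| + \ell)$. Moreover $\alpha(\partial S') = \alpha(\partial(S \uplus \hat S)) = \alpha_r(\partial(S \uplus \hat S))$, so the objective tuple $(|S'|, \alpha(\partial S'))$ agrees with $(\ell + |S|,\, \alpha_r(\partial(S \uplus \hat S)))$. Splitting the supremum in \autoref{eq:phi1} according to this decomposition and then, for fixed $S$ and $\ell$, pulling out the inner supremum over $\hat S \in \binom{F_r}{\ell}$ gives
\[
\Phi \;\equiv\; \bigsqcup_{\substack{S \subseteq X_r \\ 0 \leq \ell \leq |F_r| \\ \valid(\ell + |S|)}}\; \bigsqcup_{\hat S \in \binom{F_r}{\ell}} \tup{\ell + |S|}{\alpha_r(\partial(S \uplus \hat S))},
\]
and by the definition \autoref{eq:Gamma} the inner supremum is exactly $\Gamma_r(\ell, S)$.

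The only point that needs slight care is that $\sqsubseteq$ is merely a total quasiorder, so the suprema are understood modulo the identification of elements with equal $f$-value discussed just before the statement; this identification makes the splitting of a supremum into iterated suprema valid. There is no real obstacle: the argument is purely a reindexing of the outer supremum using the disjoint partition $V = X_r \uplus F_r$ induced by the root of the decomposition.
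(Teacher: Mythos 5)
Your proof is correct and follows essentially the same route as the paper: both arguments reindex the supremum defining $\Phi$ via the disjoint partition $V = X_r \uplus F_r$ (so that $S' \subseteq V$ corresponds to the pair $(S' \cap X_r,\, S' \cap F_r)$ with $\ell = \abs{S' \cap F_r}$) and then recognize the inner supremum over $\hat S \in \binom{F_r}{\ell}$ as $\Gamma_r(\ell, S)$. The paper simply runs the chain of equalities in the opposite direction, starting from the right-hand side and unfolding the definition of $\Gamma_r$; your explicit remarks that $Y_r = V$ and hence $\alpha_r = \alpha$ make precise a step the paper leaves implicit.
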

	\submorarxiv{\omitted}
	{\ifllncs{\begin{proof}}{\begin{proof}[\temptext]}
	\begin{multline*}
		\bigsqcup_{\substack{
			S \subseteq X_r \\
			0 \leq \ell \leq \abs {F_r} \\
			\valid(\ell + \abs S)
		}} \B_r(\ell, S)
	=
		\bigsqcup_{\substack{
			S \subseteq X_r \\
			0 \leq \ell \leq \abs {F_r} \\
			\valid(\ell + \abs S)
		}} \bigsqcup_{
		\hat S \in \binom {F_r} \ell 
	} \tup{\ell + \abs S}{\alpha_r (\partial (S \uplus \hat S)}
	=
	\bigsqcup_{\substack{
			S \subseteq X_r \\
			0 \leq \ell \leq \abs {F_r} \\
			\valid(\ell + \abs S) \\
		\hat S \in \binom {F_r} \ell 
	}} \tup{\ell + \abs S}{\alpha_r (\partial (S \uplus \hat S)}
	\\ =
	\bigsqcup_{\substack{
		S \subseteq X_r \\ \hat S \subseteq F_r
	}} \tup{\abs S + \abs {\hat S}} {\alpha_r(\partial(S \uplus \hat S))}
	= \bigsqcup_{\substack{S \subseteq V \\ \valid(\abs S)}}
	\tup {\abs S} {\alpha_r(\partial S)}
	= \Phi \qedhere \end{multline*}
\end{proof}%
 }

	\noindent It is easy to see that, if we are given the values
	$\Gamma_r(\ell, S)$ for all $0 \leq \ell \leq \abs {F_r}$ and all $S \subseteq X_r$,
	we can compute $\Phi$ in time $\O(2^tn)$. We claim that we can compute the table
	$\Gamma$ for all nodes together in overall time $\O(2^tn^2)$, implying that $\Phi$
	can be computed in time $\O(2^tn^2)$. To see this, we now show that we can
	use a dynamic program to compute the table $\Gamma$ and eventually $\Gamma_r$ in the
	desired time. Therefor, we first set up recurrences for each node type that
	we can use to efficiently compute the value for a node $i$ of this type, given that
	we already know all the values \emph{below} $i$.

	For our approach there is an important property of $\sqleq$: Basically, we are
	able to move the addition with a constant tuple outside of the supremum operator,
	if the elements all have the same first component.
	\begin{lemma}
		\label{stmt:compatible}
		For any $a$, any finite set $M \subseteq \sing a \times \Q$ and any $z$ it holds that
		\[ z \oplus \bigsqcup_{x \in M} x \equiv \bigsqcup_{x \in M} \tuplike {z \oplus x} \stopphier \]
	\end{lemma}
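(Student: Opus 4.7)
The plan is to exploit the hypothesis that every tuple in $M$ shares the same first component $a$. Under this hypothesis, the restriction of $\sqleq$ to $M$ is governed entirely by the univariate function $y \mapsto f(a, y)$ applied to the second components. Since this function is either monotonic or antitonic in its argument, the supremum $\bigsqcup_{x \in M} x$ is (up to $\equiv$) the tuple $(a, y^*)$, where $y^* = \max \pi_2(M)$ in the monotonic case and $y^* = \min \pi_2(M)$ in the antitonic case. Adding $z$ coordinate-wise then yields the left-hand side as $(a + \pi_1(z),\, y^* + \pi_2(z))$.

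Turning to the right-hand side, every element of the shifted set $\{z \oplus x : x \in M\}$ has first component $a + \pi_1(z)$, so by the same reasoning its supremum is determined by the optimum of $y \mapsto f(a + \pi_1(z), y)$ over $\{y_j + \pi_2(z) : (a, y_j) \in M\}$. The key observation is that $y \mapsto y + \pi_2(z)$ is an order isomorphism of $(\Q, \leq)$, so whichever of $\max$ or $\min$ the monotonicity of $f$ selects, the optimum is attained at $y^* + \pi_2(z)$. Hence both sides coincide as tuples and are in particular $\equiv$.

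The degenerate case $M = \emptyset$ is handled separately under the natural convention $z \oplus \bot \equiv \bot$, so that both sides equal $\bot$. The main subtlety, rather than a real difficulty, lies in treating the monotonic and antitonic cases uniformly and in respecting the paper's identification of elements with their $\equiv$-classes when several tuples in $M$ attain the $f$-optimum simultaneously; crucially, the argument never needs to commit to a canonical representative because the first component is fixed across $M$ and the shift $y \mapsto y + \pi_2(z)$ preserves $f$-ties.
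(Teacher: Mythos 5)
Your proof is correct and essentially coincides with the paper's: both arguments use that all elements of $M$ share the first component $a$, so the supremum is decided by the monotone or antitone map $h \mapsto f(a,h)$ applied to the second components, and that adding the constant $\pi_2(z)$ preserves the order of those second components (your \emph{order isomorphism} observation is exactly the paper's concluding step ``$x+z \geq y+z$''). The only differences are presentational: the paper reduces to the binary supremum $\sqcup$ and splits into the monotonic and antitonic cases, while you argue on all of $M$ at once and also treat $M=\emptyset$ explicitly; your side remark that the shift preserves $f$-ties is not literally true in general (e.g.\ $f(0,\cdot)$ can be constant while $f(b,\cdot)$ is injective), but the paper's own proof elides the same tie-breaking subtlety, so nothing is lost relative to it.
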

	\begin{proof}
		As $M$ is finite, it suffices to show\footnote{For the reader not familiar with order theory: The binary supremum
		is an associative and commutative map. If for every pair of elements there is a supremum,
		that is, a smallest element that is larger than both elements of the pair,
		then so it does for any finite set $M$. This can be shown by a simple
		inductive argument using the aforementioned associativity/commutativity.}
		this property for the binary
		supremum $\sqcup$. Let $z = (b,w)$ and $(a,x) \in M$, $(a,y) \in M$.
		If $x = y$ or $f(a, x) = f(a, y)$, the property is trivial. Thus we may
		assume without loss of generality that $x > y$ and $f(a, x) \neq f(a, y)$.
		We now have two cases, depending on $h \mapsto f(a, h)$.
The first case
		is that $h \mapsto f(a, h)$ monotonic. Then our assumption implies that
		$f(a, x) > f(a, y)$ and hence:
		\[
			(b,z) \oplus \enpar{(a,x) \sqcup (a,y) }
			= (b,z) \oplus (a,x)
			= (a + b, x + z)
			= (a + b, x + z) \sqcup (a + b, y + z) 
		\]
		The last step follows from the monotony, as $x + z \geq y + z$.
		The second case, where $h \mapsto f(a, h)$ is antitonic, can be proven
		analogously.
	\end{proof}\noindent
	This property is absolutely crucial as it gives us some freedom for transformations of
	\autoref{eq:Gamma}; it is applied multiple times in the proofs in the appendix.
	Intuitively, this lemma tells us that the optimization process works no different
	than it does e.\@{}g.\@{} Max-Cut or Max-Bisection; if we fix the number of vertices
	we choose, given a set of cuts to pick from the cut maximizing $f$ is either
	the cut of largest size or the cut of smallest size.

	\ifsubm{\subsection*{Recurrences}}{\subsection{Recurrences}}
	We now set up a recurrence, depending on the node type, to compute the table $\Gamma$
	by traversing the tree decomposition in a bottom-up fashion. In this section
	we give the intuition behind the recurrences step by step. For correctness
	proofs of the equations (which are very technical) we refer to the
	\ifsubm{\fullversion\ of the paper}{appendix}.

	\paragraph*{Leaf Node} Let $i$ be a leaf node. Then $F_i = \emptyset$.
	Thus, $\Gamma_i$ is only defined for $\ell = 0$ and $S \subseteq X_i$.
	If $S = \emptyset$, there are no edges in the cut, hence $\Gamma_i(0,\emptyset) \equiv (0,0)$.
	If $S = \sing v$ consists of a single node, we can simply go through all its
	edges, that is set
	\begin{equation} \label{eq:leaf:1}
		\Gamma_i(0, \sing v) \equiv \Big(1, \sum_{\substack{v' \in X_i \\ vv' \in E}} w(vv') \Big)
	.\end{equation}
	Now let $S = S' \uplus \sing v$ where $S' \neq \emptyset$. We will now argue
	how we can compute the value $\Gamma_i(0, S)$ given the value $\Gamma_i(0, S')$.
	In the situation considered for $\Gamma_i(0, S')$ we have $v \notin S'$. 
	If we move $v$ into the selection and are able to track and compute the changes,
	we can also compute $\Gamma_i(0, S)$. After moving $v$ into the selection, there
	might be edges from $S'$ to $\sing v$ (which all have been in the cut before); those edges
	are no more in the cut for $S$. Also, there might also be some new edges in the
	cut, more precisely all edges from $v$ to $X_i \setminus S$. There are no more
	new edges in the cut and no other edges are removed from the cut. This yields
	\begin{equation} \label{eq:leaf:2}
		\Gamma_i(0, \underbrace{\overann{S'}{\neq \emptyset} \uplus \sing v}_S) \equiv \Gamma_i(0, S')
			\oplus \tup{
				\overann{1}{\abs S - \abs {S'} = 1,\text{~we now add the new vertex $v$}}
			}{
				\underbrace{\sum_{\substack{v' \in X_i \setminus S' \\ vv' \in E}} w(vv')}
				_{\text{edges from $v$ to $X_i \setminus S'$}}
				- \underbrace{\sum_{\substack{v' \in S' \\ v'v \in E}} w(v'v)}
				_{\text{edges from $S'$ to $v$}}
			}
	.\end{equation}
	\paragraph*{Forget Node} Let $i$ be a forget node with child $j$. Then there
	is $v \in V$ such that $X_i \uplus \sing v = X_j$. Now, for the computation of the
	entries $\Gamma_i(\ell, S)$ we only have to deal with one question: is it better
	to include $v$ into the selection or not? As $v \in F_i$, this question
	is only relevant if $\ell \geq 1$; if $\ell = 0$,
	then $\Gamma_i(\ell, S) \equiv \Gamma_j(\ell, S)$.
	Now let $\ell \geq 1$. If we included $v$ into our selection, we have to
	include $\ell - 1$ further vertices from $F_j$; if we do not include $v$, we
	have to include $\ell$ further from $F_j$. We now simply pick the better result
	of both options. Overall, this yields the following recurrence:
	\begin{equation} \label{eq:forget}
		\Gamma_i(\ell, S) \equiv \begin{cases}
			\Gamma_j(\ell-1, S \uplus \sing v) \sqcup \Gamma_j(\ell, S) & \ell \geq 1 \\
			\Gamma_j(\ell, S) & \ell = 0
		\end{cases}.
	\end{equation}
	\paragraph*{Introduce Node} Let $i$ be an introduce node with child $j$. Then
	there is $v \in V$ such that $X_i = X_j \uplus \sing v$. When computing
	$\Gamma_i(\ell, S)$ we have to make a case distinction whether $v$ is in $S$
	or not.

	In the first case $v \in S$ we rely on $\Gamma_j(\ell, S \setminus \sing v)$
	to compute the value. As we consider a selection of $S$ and $\ell$ further
	vertices of $F_i$ as opposed to the entry $\Gamma_j(\ell, S \setminus \sing v)$
	which only considers a selection of $S \setminus \sing v$ and $\ell$ further
	vertices of $F_j = F_i$, we have to add $1$ to the first component of
	$\Gamma_j(\ell, S \setminus \sing v)$ to account for the additional vertex $v$.
	For the second component, the size of the cut, we only have to add the
	weight of all edges from $v$ to $X_j \setminus S$. Note that there are no
	weights that we need to subtract; while there might be edges between
	$v$ and $S$, those are not considered in the computation of
	$\Gamma_j(\ell, S \setminus \sing v)$
	as the vertex $v$ does not appear in $Y_j$ (due to \hyperref[defi:tree-decomposition:coherence]{Coherence}
	and \hyperref[defi:tree-decomposition:edge_coverage]{Edge Coverage}).
	Thus, for $v \in S$ we get
	\begin{equation} \label{eq:introduce:1}
		\Gamma_i(\ell, S) \equiv
			\tup{ \overann{1}{\text{\qquad\qquad\qquad\quad we now add the new vertex $v$ to the selection $S \setminus \sing v$}} }{
			\underbrace{\alpha_i(\partials v \cap (\sing v \times (X_i \setminus S))}
			_{\mathclap{\text{edges from $v$ to $X_i \setminus S$; we will later discuss how
			to compute this value}}}
			} \oplus
			\Gamma_j(\ell, S \smin v)
	.\end{equation}

	Now let us consider the case $v \notin S$. The argumentation is very similar.
	As $v \notin S$, there is no new vertex we add to the selection. However,
	there are possibly new edges and that change the value of the cut: the edges
	from $S$ to $\sing v$. This gives us the following recurrence for the case
	$v \notin S$:
	\begin{equation} \label{eq:introduce:2}
	\Gamma_i(\ell ,S) \equiv 
		\tup  0 { \underbrace{
		\alpha_i(\partial S \cap (X_i\times \sing v))}
		_{\mathclap{\text{edges from $S$ to $\sing v$; we will later discuss how
			to compute this value}}}
		} \oplus \Gamma_j(\ell, S)
	.\end{equation}

	In both cases there is a (case-dependent) additive term that has to b eadded to
	an entry of $\Gamma_j$. Observe that this additive term is independent of
	$\ell$.
As it turns out, this additive term (depending only on $S$) can be rewritten as
	\begin{equation} \label{eq:introduce:3}
		\pi_2 \big(\Gamma_i(\ell, S) \ominus \Gamma_j(\ell, S) \big)
	\end{equation}
	for any arbitrary, but fixed $\ell$;
	for details we refer to
	\ifsubm{the \fullversion}{\autoref{stmt:eq:introduce:3} in the appendix}.
	Thus, we can do the following: for $\ell = 0$ we need to explicitly
	compute the additive term for all $S \subseteq X_i$.
	For all $\ell \geq 1$ we can simply use 
	$\pi_2(\Gamma_i(\ell-1, S) \ominus \Gamma_i(\ell-1, S))$ for all $S \subseteq X_i$.

	\paras{Join Node} Let $i$ be a join node with left child
	$j$ and right child $k$. Then $X_i = X_j = X_k$. The value 
	$\Gamma_i(\ell, S)$ can then be interpreted as the tuple maximizing $f$
	over all selections $\hat S \subseteq Y_i$ where there are $\ell_1$ vertices
	from $F_j$ and $\ell_2$ from $F_k$ for every
	$0 \leq \ell_1 \leq \abs {F_j}$,
	$0 \leq \ell_2 \leq \abs {F_k}$ where
	$\ell_1 + \ell_1 = \ell$. This is closely related to
	\begin{equation*}
		\Gamma_i(\ell, S) \equiv \bigsqcup_{\substack{
			0 \leq \ell_1 \leq \abs{F_j} \\
			0 \leq \ell_2 \leq \abs{F_k} \\
			\ell_2 + \ell_1 = \ell
		}}
		\tuplike{
			\Gamma_j(\ell_1, S)
		\oplus
			\Gamma_k(\ell_2, S)
		}
	;\end{equation*}
	it is not hard to see that this equation almost computes the desired
	tuple, with the exception that each vertex in $S$ and each edge of $E \cap X_i^2$
	that is in the cut is counted twice. Thus, by subtracting the doubly counted
	vertices/edge weights and applying \autoref{stmt:compatible} we get
	\begin{equation} \label{eq:join:1}
		\Gamma_i(\ell, S) \equiv \bigg(\bigsqcup_{\substack{
			0 \leq \ell_1 \leq \abs{F_j} \\
			0 \leq \ell_2 \leq \abs{F_k} \\
			\ell_2 + \ell_1 = \ell
		}}
		\tuplike{
			\Gamma_j(\ell_1, S)
		\oplus
			\Gamma_k(\ell_2, S)
		}
		\bigg) \ominus \tup{ \abs S }{ \alpha_i(\partial S) }
	.\end{equation}
	
	\ifsubm{The running time analysis is analogous to
		the analysis of the algorithm for Max-Bisection we
		presented before. We thus skip this part here (due to space restrictions); if interested, the reader
		can find a detailed analysis of every case in the
		appendix}{\subsection{Running Time}
		We claim that our algorithm has an overall running time of
	$\O(2^tn^2)$. To prove this we first need to deal with some preprocessing
	steps. The first step was already mentioned: converting the tree decomposition
	into a nice tree decomposition. This can be done in $\O(nt^2)$ according to
	\autoref{stmt:nice}. We model sets as bit vectors of length $n$; we omit the
	(very technical) details on how set operations can be implemented on bit vectors
	such that they all take only constant time.
	The second preprocessing step is creating an adjacency matrix of the graph
	in time $\O(n^2)$; this is necessary to be able to compute the explicit sums
	occurring in our recurrences in time $\O(n)$ each. We now claim that our dynamic
	program takes the running time specified in \autoref{figure:running_time_goals}
	per node, depending on the node's type. As we only have $\O(n)$ nodes in the
	decomposition (it is a small decomposition), this directly implies that
	the computation of $\Gamma_i$ only takes time $\O(2^tn^2)$.

	\begin{figure}[t]
		\centering
		\begin{tabulary}{\textwidth}{|L||LL|}
			\hline
			Leaf node & $\O(2^t n)$ & per node 
			\\ \hline
			Forget node & $\O(2^t \abs{F_i})$ & per node
			\\ \hline
			Introduce node & $\O(2^t n)$ & per node
			\\ \hline
			\multirow{2}{*}{Join node} & $\O(2^t \abs{F_j \times F_k})$ & per node
			\\ 
			& $\O(2^t n^2)$ & for all nodes
			\\ \hline
		\end{tabulary}
		\caption{Running times of the computation of $\Gamma_i$ for
		a node $i \in I$ (if there is one child $i$, let $j$ be that child, and if
		there are two children, let $j,k$ be those children of $i$) depending
		on the node's type}
		\label{figure:running_time_goals}
	\end{figure}

	\paras{Leaf Node} Let $i$ be a leaf node. Then $F_i = \emptyset$.
	Hence $\Gamma_i$ has $\O(2^t)$ entries. We make use of \multiautoref{eq:leaf:1, eq:leaf:2}
	to compute these entries; both equations have a constant number of sums, and
	each sum can be computed in $\O(n)$ using the adjacency matrix and basic
	set operations.

	\paras{Forget Node}
	Let $i$ be a forget node with child $j$. Then $X_i \uplus \sing v = X_j$
	for some $v \in V$. There are $\O(2^t \abs{F_i})$ entries in $\Gamma_i$.
	As we only use \autoref{eq:forget} to compute the values, which takes
	constant time per value, the overall running time for a single forget node
	is $\O(2^t \abs{F_i})$.
	
	\paras{Introduce Node}
	Let $i$ be an introduce node with child $j$. Then $X_i = X_j \uplus \sing v$
	for some $v \in V$. There are $\O(2^t \abs{F_i}) \subseteq \O(2^tn)$ entries
	in $\Gamma_i$ we need to compute. For the entries $\Gamma_i(0, S)$ for
	$S \subseteq X_i$ we use \multiautoref{eq:introduce:1, eq:introduce:2} and explicitly compute the
	$\alpha_i$ expressions as sums in $\O(n)$ each. There are at most
	$\O(2^t)$ entries for $\ell = 0$, thus for all these entries we need at most
	$\O(2^tn)$ time.
	For all other $\O(2^tn - 2^t) = \O(2^tn)$ entries we make use
	of \autoref{eq:introduce:3} which allows us now to use 
	\multiautoref{eq:introduce:1, eq:introduce:2} to compute the remaining entries
	in time $\O(1)$ each. The overall running time for a single introduce node
	is hence $\O(2^tn)$.

	\paras{Join Node}
	Let $i$ be a join node with left child $j$ and right child $k$.
	Then $X_i = X_j = X_k$. There are $\O(2^t \abs{F_i}) \subseteq \O(2^t \abs{F_j}\cdot\abs{F_k})$
	entries we need
	to compute. We only use \autoref{eq:join:1} to compute them. For a fixed $S \subseteq
	X_i$, we need to compare $\abs{F_j} \cdot \abs{F_k}$ different pairs in that equation.
	This means
	that the running time -- per entry -- is in $\O(\abs{F_j} \cdot \abs{F_k})
	= \O(\abs{F_j \times F_k})$. Using \autoref{eq:nn} (which is a consequence of
	\autoref{stmt:pairs}) it follows that the overall running time for all entries
	of all join nodes is bounded by $\O(2^tn^2)$. }
\exceptsubm{\section{Applications} \label{sec:instantiations}
	We now give applications to several cut problems. Recall that our framework
	computes $\Phi$, a 2-tuple that maximizes a given map $f$. At this point we
	want to point out that $\Phi = \bot$ can only occur if there is no partition
	of the vertices whose cardinality satisfies $\valid$. This does not occur
	in our applications.\footnote{In general, we cannot think of any reason why
	this case should occur in practice; it seems to us that if it occurs this
	is likely to indicate an error of the model used.}
	Note that for any instantiation of the framework we have
\begin{equation}
	f(\Phi)
		= f\bigg(\bigsqcup_{\substack{S \subseteq V \\ \valid(\abs V)}}
			\tup {\abs S} {\alpha(\partial S)}
		\bigg)
		= \max_{\substack{S \subseteq V \\ \valid(\abs S)}}
		f \tup{\abs S}{\alpha(\partial S)}.
\end{equation} 
	\para{Max-Cut}
	Set $f (\nu, s)\defeq  s$ and $\valid(x) \defeq \true$.
	It is not hard to see that $f$ is monotonic in the second argument and hence
	fulfils the requirements of our framework.
\begin{lemma}
	\[ f(\Phi) = \max_{S \subseteq V} \alpha(\partial S) \stopphier \]
\end{lemma}
\begin{proof}
	\[
		f(\Phi)
		= \max_{\substack{S \subseteq V \\ \valid(\abs S)}}
			f \tup{ \abs S }{ \alpha(\partial S) }
		= \max_{S \subseteq V} f \tup{ \abs S }{ \alpha(\partial S) }
		= \max_{S \subseteq V} \alpha(\partial S) \qedhere
	\]		
\end{proof} 
	\para{Max-Bisection'}
	Set $f(\nu, s) \defeq s$ and $\valid(x) \defeq
	\big( \abs*{2x - n} \leq 1 \big)$. It is not hard to see that $f$ is 
	monotonic in the second argument and hence fulfils the requirements of our
	framework.
\begin{lemma}
	\[ f(\Phi) = \max_{\substack{S \subseteq V \\ \abs*{\abs{S} - \abs{V \setminus S}} \leq 1}} \alpha(\partial S) \stopphier \]
\end{lemma}
\begin{proof}
	\[
		f(\Phi)
		 = \max_{\substack{S \subseteq V \\ \valid(\abs S)}}
		 f \tup{ \abs S }{ \alpha(\partial S) }
		 = \max_{\substack{S \subseteq V \\ \abs*{2\abs{S} - n} \leq 1}} \alpha(\partial S)
		 = \max_{\substack{S \subseteq V \\ \abs*{\abs{S} - \abs{V \setminus S}} \leq 1}} \alpha(\partial S) \qedhere
	\]
\end{proof} 
	\para{$\beta$-Balanced-Min-Cut'}
	Set $f (\nu, s) \defeq -s$ and $\valid(x) \defeq \big(\beta n \leq x \leq (1-\beta)n \big)$; the negation is required to model a minimization problem in our maximization framework. It is not hard to see that $f$ is 
	antitonic in the second argument and hence fulfils the requirements of our
	framework.
\begin{lemma}
	\[ - f(\Phi) = \min_{\substack{S \subseteq V \\ \beta \cdot \abs V \leq \abs S \leq (1 - \beta) \cdot \abs V}} \alpha(\partial S) \stopphier \]
\end{lemma}
\begin{proof}
	\[
		- f(\Phi)
		 = - \max_{\substack{S \subseteq V \\ \valid(\abs S)}}
		 	f \tup{ \abs S }{ \alpha(\partial S) }
		 = \min_{\substack{S \subseteq V \\ \beta n \leq \abs S \leq (1-\beta) n}}
		 	- f \tup{ \abs S }{ \alpha(\partial S) }
		 = \min_{\substack{S \subseteq V \\ \beta \abs V \leq \abs S \leq (1-\beta) \abs V}}
		 \alpha(\partial S) \qedhere
	\]
\end{proof} 
	\para{Min-Edge-Expansion'}
	Set $f (\nu, s) \defeq \begin{cases}
		-\infty & \nu = 0 \\
		-\frac{s}{\nu}& \totherwise
	\end{cases}$.
	We claim that $f$ is antitonic in its
	second argument. To see this, fix $\nu$ such that $0 \leq \nu \leq n$.
	and let $g(s) \defeq f(\nu, s)$.
	If $\nu = 0$, then for all $s$ we have that  $g(s) = f(\nu, s) = -\infty$ is a constant
	map and hence antitonic. If $\nu > 0$, then then for all $s$ we have that $g(s) =  f(\nu, s)
	= -\frac s \nu$ is a linear map with negative sign and hence
	also antitonic.

	We now need the validator function to exclude all sets $S$ where
	$\abs S > \abs{V \setminus S}$; we do not have to explicitly exclude the
	empty set as our measure function evaluates to $- \infty$ for those cases
	(but we may do so). Thus, we might define $\valid(x) \defeq \big(x \leq n - x\big)$. Then the value of the Min-Edge-Expansion can be
	recovered from the result of our framework $\Phi$ by computing
	$-f(\Phi)$. That value cannot be $\infty$ unless the graph is empty.
\begin{lemma}
	If $n \geq 1$, then $ \displaystyle -f(\Phi)
	= \min_{\substack{\emptyset \neq S \subseteq V \\ \abs S \leq \abs{V \setminus S}}}
	\frac{\alpha(\partial S) }{\abs S} $. \stopp
\end{lemma}
\begin{proof} \begin{multline*}
		-f(\Phi)
		= -\max_{\substack{S \subseteq V \\ \valid(\abs S)}}
			f \tup{ \abs S }{ \alpha(\partial S) }
		= \min_{\substack{S \subseteq V \\ \abs S \leq n - \abs S}}
			-f \tup{ \abs S }{ \alpha(\partial S) }
		\\ = \min_{\substack{S \subseteq V \\ \abs S \leq \abs {V \setminus S}}} \begin{cases}
				\infty & \abs S = 0 \\
				\frac{\alpha(\partial S)}{\abs S} & \totherwise
			\end{cases}
		= \min \Sing{ \infty, \min_{\substack{\emptyset \neq S \subseteq V \\ \abs S \leq \abs {V \setminus S}}} \frac{\alpha(\partial S)}{\abs S}}
		\overset{n \geq 1}=
		\min_{\substack{\emptyset \neq S \subseteq V \\ \abs S \leq \abs {V \setminus S}}} \frac{\alpha(\partial S)}{\abs S}
	\myqedhere\end{multline*}
\end{proof} 
	\para{Sparsest-Cut'}
	Sparsest Cut' is quite similar to Min-Edge-Expansion'. We set
	$f(\nu, s) \defeq \begin{cases}
		-\infty & \nu \in \sing{0,n} \\
		-\frac{s}{\nu (n - \nu)}& \totherwise
	\end{cases}$. This is again an antitonic map, following the same
	argumentation as for Min-Edge-Expansion'. The validator function becomes
	trivial for Sparsest-Cut', that is $\valid(x) \defeq \true$.
	Recovering the Sparsest-Cut' from $\Phi$ works the same way as before:
	it can be expressed as $-f(\Phi)$ which, again, cannot be $\infty$
	unless the graph has at most one vertex.
	\begin{lemma}
	If $n \geq 2$, then $ \displaystyle -f(\Phi)
	= \min_{\emptyset \neq S \subsetneq V}
	\frac{\alpha(\partial S) }{\abs S \cdot \abs {V \setminus S}} $. \stopp
\end{lemma}
\begin{proof}\begin{multline*}
		-f(\Phi) =
		-\max_{\substack{S \subseteq V \\ \valid(\abs S)}}
			f\tup{ \abs S }{ \alpha(\partial S) }
		= \min_{S \subseteq V}
			-f\tup{ \abs S }{ \alpha(\partial S) }
		= \min_{S \subseteq V} \begin{cases}
				\infty & \abs S \in \sing{0, n} \\
				\frac{\alpha(\partial S)}{\abs S \cdot \abs{V \setminus S}} & \totherwise
			\end{cases}
		\\ = \min \Sing{ \infty, \min_{\emptyset \neq S \subseteq V} \frac{\alpha(\partial S)}{\abs S \cdot \abs{V \setminus S}}}
		\overset{n \geq 2}=
		\min_{\emptyset \neq S \subsetneq V} \frac{\alpha(\partial S)}{\abs S \cdot \abs{V \setminus S}}
	\myqedhere\end{multline*}
\end{proof} 
	\para{Densest-Cut'}
	As Densest-Cut' is the complementary problem of Sparsest-Cut', the instantiation
	of the framework is very similar: We set $f (\nu, s) \defeq \begin{cases}
		-\infty & \nu \in \sing{0,n} \\
		\frac{s}{\nu (n - \nu)}& \totherwise
	\end{cases}$ and $\valid(x) \defeq \true$. Then, $f(\Phi)$ is the
	value of the densest cut (and $-\infty$ if there is at most one vertex).
	\begin{lemma}
	If $n \geq 2$, then $ \displaystyle f(\Phi)
	= \max_{\emptyset \neq S \subsetneq V}
	\frac{\alpha(\partial S) }{\abs S \cdot \abs {V \setminus S}} $. \stopp
\end{lemma}
\begin{proof}~\vspace{-1.7em}
	\begin{multline*}
		f(\Phi) =
		\max_{\substack{S \subseteq V \\ \valid(\abs S)}}
			f\tup{ \abs S }{ \alpha(\partial S) }
		= \max_{S \subseteq V}
			f\tup {\abs S} {\alpha(\partial S) }
		= \max_{S \subseteq V} \begin{cases}
				-\infty & \abs S \in \sing{0, n} \\
				\frac{\alpha(\partial S)}{\abs S \cdot \abs{V \setminus S}} & \totherwise
			\end{cases}
		\\ = \max \Sing{ -\infty, \max_{\emptyset \neq S \subseteq V} \frac{\alpha(\partial S)}{\abs S \cdot \abs{V \setminus S}}}
		\overset{n \geq 2}=
		\max_{\emptyset \neq S \subsetneq V} \frac{\alpha(\partial S)}{\abs S \cdot \abs{V \setminus S}}
	\myqedhere \end{multline*}
\end{proof}
  }
	\section{Conclusion}
	We showed that -- given a small tree decomposition of width $t$ -- many cut problems
	can be solved in time $\O(2^tn^2)$ using our framework. To our knowledge, the running times achieved by our framework are better than the previously known algorithms for
	the considered problems. Moreover, this running
	time is unlikely to be improved significantly (improvements by factors $\poly t$ and/or $\poly \log n$ are not excluded) in general: An algorithm (solving Min-Bisection)
	that runs in time
	$\O(n^{2 - \eps} f(t))$ for some $f$ and some $\eps > 0$  would imply an algorithm
	of running time $\O(n^{2-\delta})$ for some $\delta > 0$ for
	$(\min, +)$-convolution~\cite{DBLP:journals/jcss/EibenLM21},
	which is considered unlikely~\cite{cygan_2017}. An algorithm (solving Max-Cut)
	cannot have a running time $\O((2-\eps)^t \poly n)$ for some $\eps > 0$ unless
	SETH fails~\cite{DBLP:journals/talg/LokshtanovMS18,DBLP:journals/tcs/HanakaKS21,DBLP:journals/jcss/ImpagliazzoP01,DBLP:conf/iwpec/CalabroIP09}. However, there might be problems
	that can be solved using our framework that we have not considered yet. Moreover,
	it might be possible to generalize the framework (with possibly worse running time)
	to e.\@{}g\@{} be able to also cover connectivity problems (see \cite{DBLP:journals/iandc/BodlaenderCKN15}). \onlysubm{\clearpage}
	\section{References}\printbibliography[heading=none]
\onlysubm{\end{document}\endinput}
\clearpage
	\onlylipics{\ifthenelse{\boolean{oddpage}}{}{\thispagestyle{plain}\phantom a\clearpage}}
	\appendix \part*{Appendix}\allowdisplaybreaks
	\addcontentsline{toc}{section}{Appendix}
	\onlysubm{
		\onlyllncs{\renewcommand\para[1]{\subsection{#1}}}
		\section{Applications} \label{sec:instantiations}
	We now give applications to several cut problems. Recall that our framework
	computes $\Phi$, a 2-tuple that maximizes a given map $f$. At this point we
	want to point out that $\Phi = \bot$ can only occur if there is no partition
	of the vertices whose cardinality satisfies $\valid$. This does not occur
	in our applications.\footnote{In general, we cannot think of any reason why
	this case should occur in practice; it seems to us that if it occurs this
	is likely to indicate an error of the model used.}
	Note that for any instantiation of the framework we have
\begin{equation}
	f(\Phi)
		= f\bigg(\bigsqcup_{\substack{S \subseteq V \\ \valid(\abs V)}}
			\tup {\abs S} {\alpha(\partial S)}
		\bigg)
		= \max_{\substack{S \subseteq V \\ \valid(\abs S)}}
		f \tup{\abs S}{\alpha(\partial S)}.
\end{equation} 
	\para{Max-Cut}
	Set $f (\nu, s)\defeq  s$ and $\valid(x) \defeq \true$.
	It is not hard to see that $f$ is monotonic in the second argument and hence
	fulfils the requirements of our framework.
\begin{lemma}
	\[ f(\Phi) = \max_{S \subseteq V} \alpha(\partial S) \stopphier \]
\end{lemma}
\begin{proof}
	\[
		f(\Phi)
		= \max_{\substack{S \subseteq V \\ \valid(\abs S)}}
			f \tup{ \abs S }{ \alpha(\partial S) }
		= \max_{S \subseteq V} f \tup{ \abs S }{ \alpha(\partial S) }
		= \max_{S \subseteq V} \alpha(\partial S) \qedhere
	\]		
\end{proof} 
	\para{Max-Bisection'}
	Set $f(\nu, s) \defeq s$ and $\valid(x) \defeq
	\big( \abs*{2x - n} \leq 1 \big)$. It is not hard to see that $f$ is 
	monotonic in the second argument and hence fulfils the requirements of our
	framework.
\begin{lemma}
	\[ f(\Phi) = \max_{\substack{S \subseteq V \\ \abs*{\abs{S} - \abs{V \setminus S}} \leq 1}} \alpha(\partial S) \stopphier \]
\end{lemma}
\begin{proof}
	\[
		f(\Phi)
		 = \max_{\substack{S \subseteq V \\ \valid(\abs S)}}
		 f \tup{ \abs S }{ \alpha(\partial S) }
		 = \max_{\substack{S \subseteq V \\ \abs*{2\abs{S} - n} \leq 1}} \alpha(\partial S)
		 = \max_{\substack{S \subseteq V \\ \abs*{\abs{S} - \abs{V \setminus S}} \leq 1}} \alpha(\partial S) \qedhere
	\]
\end{proof} 
	\para{$\beta$-Balanced-Min-Cut'}
	Set $f (\nu, s) \defeq -s$ and $\valid(x) \defeq \big(\beta n \leq x \leq (1-\beta)n \big)$; the negation is required to model a minimization problem in our maximization framework. It is not hard to see that $f$ is 
	antitonic in the second argument and hence fulfils the requirements of our
	framework.
\begin{lemma}
	\[ - f(\Phi) = \min_{\substack{S \subseteq V \\ \beta \cdot \abs V \leq \abs S \leq (1 - \beta) \cdot \abs V}} \alpha(\partial S) \stopphier \]
\end{lemma}
\begin{proof}
	\[
		- f(\Phi)
		 = - \max_{\substack{S \subseteq V \\ \valid(\abs S)}}
		 	f \tup{ \abs S }{ \alpha(\partial S) }
		 = \min_{\substack{S \subseteq V \\ \beta n \leq \abs S \leq (1-\beta) n}}
		 	- f \tup{ \abs S }{ \alpha(\partial S) }
		 = \min_{\substack{S \subseteq V \\ \beta \abs V \leq \abs S \leq (1-\beta) \abs V}}
		 \alpha(\partial S) \qedhere
	\]
\end{proof} 
	\para{Min-Edge-Expansion'}
	Set $f (\nu, s) \defeq \begin{cases}
		-\infty & \nu = 0 \\
		-\frac{s}{\nu}& \totherwise
	\end{cases}$.
	We claim that $f$ is antitonic in its
	second argument. To see this, fix $\nu$ such that $0 \leq \nu \leq n$.
	and let $g(s) \defeq f(\nu, s)$.
	If $\nu = 0$, then for all $s$ we have that  $g(s) = f(\nu, s) = -\infty$ is a constant
	map and hence antitonic. If $\nu > 0$, then then for all $s$ we have that $g(s) =  f(\nu, s)
	= -\frac s \nu$ is a linear map with negative sign and hence
	also antitonic.

	We now need the validator function to exclude all sets $S$ where
	$\abs S > \abs{V \setminus S}$; we do not have to explicitly exclude the
	empty set as our measure function evaluates to $- \infty$ for those cases
	(but we may do so). Thus, we might define $\valid(x) \defeq \big(x \leq n - x\big)$. Then the value of the Min-Edge-Expansion can be
	recovered from the result of our framework $\Phi$ by computing
	$-f(\Phi)$. That value cannot be $\infty$ unless the graph is empty.
\begin{lemma}
	If $n \geq 1$, then $ \displaystyle -f(\Phi)
	= \min_{\substack{\emptyset \neq S \subseteq V \\ \abs S \leq \abs{V \setminus S}}}
	\frac{\alpha(\partial S) }{\abs S} $. \stopp
\end{lemma}
\begin{proof} \begin{multline*}
		-f(\Phi)
		= -\max_{\substack{S \subseteq V \\ \valid(\abs S)}}
			f \tup{ \abs S }{ \alpha(\partial S) }
		= \min_{\substack{S \subseteq V \\ \abs S \leq n - \abs S}}
			-f \tup{ \abs S }{ \alpha(\partial S) }
		\\ = \min_{\substack{S \subseteq V \\ \abs S \leq \abs {V \setminus S}}} \begin{cases}
				\infty & \abs S = 0 \\
				\frac{\alpha(\partial S)}{\abs S} & \totherwise
			\end{cases}
		= \min \Sing{ \infty, \min_{\substack{\emptyset \neq S \subseteq V \\ \abs S \leq \abs {V \setminus S}}} \frac{\alpha(\partial S)}{\abs S}}
		\overset{n \geq 1}=
		\min_{\substack{\emptyset \neq S \subseteq V \\ \abs S \leq \abs {V \setminus S}}} \frac{\alpha(\partial S)}{\abs S}
	\myqedhere\end{multline*}
\end{proof} 
	\para{Sparsest-Cut'}
	Sparsest Cut' is quite similar to Min-Edge-Expansion'. We set
	$f(\nu, s) \defeq \begin{cases}
		-\infty & \nu \in \sing{0,n} \\
		-\frac{s}{\nu (n - \nu)}& \totherwise
	\end{cases}$. This is again an antitonic map, following the same
	argumentation as for Min-Edge-Expansion'. The validator function becomes
	trivial for Sparsest-Cut', that is $\valid(x) \defeq \true$.
	Recovering the Sparsest-Cut' from $\Phi$ works the same way as before:
	it can be expressed as $-f(\Phi)$ which, again, cannot be $\infty$
	unless the graph has at most one vertex.
	\begin{lemma}
	If $n \geq 2$, then $ \displaystyle -f(\Phi)
	= \min_{\emptyset \neq S \subsetneq V}
	\frac{\alpha(\partial S) }{\abs S \cdot \abs {V \setminus S}} $. \stopp
\end{lemma}
\begin{proof}\begin{multline*}
		-f(\Phi) =
		-\max_{\substack{S \subseteq V \\ \valid(\abs S)}}
			f\tup{ \abs S }{ \alpha(\partial S) }
		= \min_{S \subseteq V}
			-f\tup{ \abs S }{ \alpha(\partial S) }
		= \min_{S \subseteq V} \begin{cases}
				\infty & \abs S \in \sing{0, n} \\
				\frac{\alpha(\partial S)}{\abs S \cdot \abs{V \setminus S}} & \totherwise
			\end{cases}
		\\ = \min \Sing{ \infty, \min_{\emptyset \neq S \subseteq V} \frac{\alpha(\partial S)}{\abs S \cdot \abs{V \setminus S}}}
		\overset{n \geq 2}=
		\min_{\emptyset \neq S \subsetneq V} \frac{\alpha(\partial S)}{\abs S \cdot \abs{V \setminus S}}
	\myqedhere\end{multline*}
\end{proof} 
	\para{Densest-Cut'}
	As Densest-Cut' is the complementary problem of Sparsest-Cut', the instantiation
	of the framework is very similar: We set $f (\nu, s) \defeq \begin{cases}
		-\infty & \nu \in \sing{0,n} \\
		\frac{s}{\nu (n - \nu)}& \totherwise
	\end{cases}$ and $\valid(x) \defeq \true$. Then, $f(\Phi)$ is the
	value of the densest cut (and $-\infty$ if there is at most one vertex).
	\begin{lemma}
	If $n \geq 2$, then $ \displaystyle f(\Phi)
	= \max_{\emptyset \neq S \subsetneq V}
	\frac{\alpha(\partial S) }{\abs S \cdot \abs {V \setminus S}} $. \stopp
\end{lemma}
\begin{proof}~\vspace{-1.7em}
	\begin{multline*}
		f(\Phi) =
		\max_{\substack{S \subseteq V \\ \valid(\abs S)}}
			f\tup{ \abs S }{ \alpha(\partial S) }
		= \max_{S \subseteq V}
			f\tup {\abs S} {\alpha(\partial S) }
		= \max_{S \subseteq V} \begin{cases}
				-\infty & \abs S \in \sing{0, n} \\
				\frac{\alpha(\partial S)}{\abs S \cdot \abs{V \setminus S}} & \totherwise
			\end{cases}
		\\ = \max \Sing{ -\infty, \max_{\emptyset \neq S \subseteq V} \frac{\alpha(\partial S)}{\abs S \cdot \abs{V \setminus S}}}
		\overset{n \geq 2}=
		\max_{\emptyset \neq S \subsetneq V} \frac{\alpha(\partial S)}{\abs S \cdot \abs{V \setminus S}}
	\myqedhere \end{multline*}
\end{proof}
  	}

	\onlysubm{
		\section{Omitted Proofs}
		\renewcommand\temptext{Proof of \autoref{stmt:rewrite}}
		\ifllncs{\begin{proof}}{\begin{proof}[\temptext]}
	\begin{multline*}
		\bigsqcup_{\substack{
			S \subseteq X_r \\
			0 \leq \ell \leq \abs {F_r} \\
			\valid(\ell + \abs S)
		}} \B_r(\ell, S)
	=
		\bigsqcup_{\substack{
			S \subseteq X_r \\
			0 \leq \ell \leq \abs {F_r} \\
			\valid(\ell + \abs S)
		}} \bigsqcup_{
		\hat S \in \binom {F_r} \ell 
	} \tup{\ell + \abs S}{\alpha_r (\partial (S \uplus \hat S)}
	=
	\bigsqcup_{\substack{
			S \subseteq X_r \\
			0 \leq \ell \leq \abs {F_r} \\
			\valid(\ell + \abs S) \\
		\hat S \in \binom {F_r} \ell 
	}} \tup{\ell + \abs S}{\alpha_r (\partial (S \uplus \hat S)}
	\\ =
	\bigsqcup_{\substack{
		S \subseteq X_r \\ \hat S \subseteq F_r
	}} \tup{\abs S + \abs {\hat S}} {\alpha_r(\partial(S \uplus \hat S))}
	= \bigsqcup_{\substack{S \subseteq V \\ \valid(\abs S)}}
	\tup {\abs S} {\alpha_r(\partial S)}
	= \Phi \qedhere \end{multline*}
\end{proof}%
 	}

	\section{Correctness of the Recurrences}
	In order to prove the correctness of our recurrences, we first need a few calculation rules
	for the interaction of $\partial A$ and the intersection with specific Cartesian products.
	This will be done in the following section.
	Afterwards, there is one section per node type
	where the correctness of the recurrence for this
	node type is shown formally.

	\subsection{Calculation Rules}
	We start with the simplest and very obvious rule that
states that the weight of edges of the disjunct union
$A \uplus B$
of two disjointedge sets $A, B$ is exactly the sum of the weights of $A$ plus those of $B$:
\begin{lemma} \label{rule:alpha}
	For two sets disjoint edge sets $A$, $B$ we have
	\[ \alpha(A \uplus B) = \alpha(A) + \alpha(B).
	\stopphier \]
\end{lemma}
\begin{proof} Follows directly from definition. \placeqed \end{proof}
Let us now consider some vertex set $A$ and the edges
leaving $A$, that is the edges going from $A$ to $\overline A$. It is not hard to see that the intersection
with $B \times C$ for $B \supseteq A$ and $C \supseteq \overline A$ leaves this set unchanged:
\begin{lemma} \label{stmt:expand} \label{stmt:reduce}
	For any $A$, $B \supseteq A$, and $C \supseteq \overline A$ we have
	\[
		\partial A = \partial A \cap (B \times C). \stopphier
	\]
\end{lemma}
\begin{proof}\begin{multline*}
		\partial A
		= \set{v_1v_2 \in E | v_1 \in A, v_2 \in \overline A}
		= \set{v_1v_2 \in E | v_1 \in A, v_2 \in \overline A} \cap (B \times C)
		\\ = \partial A \cap (B \times C) \qedhere
	\end{multline*}
\end{proof}Now we show rules that are more complex. The following set of rules consider
some edge set $J$ and vertex sets $K$ and $L$.
More precisly, the expression $J \cap (K \times L)$ can
be decomposed if there are certian relations between
the sets involved.
\begin{lemma} \label{l4}
Let $A$ and $B$ be disjoint sets.
\begin{enumerate}[{\normalfont (i)}]
	\item \label{l4:two} \dpm{
		\partial (A \uplus B)
		=     \partial A \cap (A \times \overline B)
		\uplus \partial B \cap (B \times \overline A)
	}
	\suspend
\end{enumerate}
Now let $C$ be a set disjoint with both, $A$ and $B$.
\begin{enumerate}[{\normalfont (i)}]
	\resume
	\item \label{l4:three:right} \dpm{
		\partial A \cap (A \times (B \uplus C))
		= \partial A \cap (A \times B) \uplus \partial A \cap (A \times C)
	}
	\item \label{l4:three:left} \dpm{
		\partial (A \uplus B) \cap ((A \uplus B) \times C)
		= \partial A \cap (A \times C)
		\uplus \partial B \cap (B \times C)
	} \stopp
	\end{enumerate}
\end{lemma}
\begin{proof}
	\subpoint{\stmtref{l4:two}}
\begin{align*}
		\partial(A \uplus B)
		  &=   \set{ v_1v_2 \in E| v_1 \in A \uplus B, v_2 \in \overline{A \uplus B}}
		\\&=   \set{ v_1v_2 \in E| v_1 \in A, v_2 \in \overline{A \uplus B}}
		\uplus \set{ v_1v_2 \in E | v_1 \in B, v_2 \in \overline{A \uplus B} }
		\\&=   \set{ v_1v_2 \in E| v_1 \in A, v_2 \in \overline{A} \land v_2 \in \overline B}
		\uplus \set{ v_1v_2 \in E| v_1 \in B, v_2 \in \overline{A} \land v_2 \in \overline B}
		\\&=   \set{ v_1v_2 \in E | v_1 \in A, v_2 \in \overline A} \cap (A \times \overline B)
		\uplus \set{ v_1v_2 \in E | v_1 \in B, v_2 \in \overline B} \cap (B \times \overline A)
		\\&=   \partial A \cap (A \times \overline B)
		\uplus \partial B \cap (B \times \overline A) 
		\localqedhere
	\end{align*}

	\subpoint{\stmtref{l4:three:right}}
	Note note that $B \uplus C$ and $A$ are disjoint, which implies $B,C \subseteq (B \uplus C) \subseteq \overline A$.
\begin{align*}
		\partial A \cap (A \times (B \uplus C))
		&= \set{v_1v_2 \in E | v_1 \in A, v_2 \in \overline A}
		\cap (A \times (B \uplus C))
		\\
		&\overann={(B \uplus C) \subseteq \overline A} \set{v_1v_2 \in E | v_1 \in A, v_2 \in B \uplus C}
		\\
		&= \set{v_1v_2 \in E | v_1 \in A, v_2 \in B}
		\uplus
		\set{v_1v_2 \in E | v_1 \in A, v_2 \in C}
		\\
		&\overann={B \subseteq \overline A \land C \subseteq \overline A} \partial A \cap (A \times B)
		\uplus \partial A \cap (A \times C) \localqedhere
	\end{align*}

	\subpoint{\stmtref{l4:three:left}}
	Note note that $A \uplus B$ and $C$ are disjoint, which implies $(A \uplus B) \subseteq \overline C$,
	which is equivalent to $\overline {A \uplus B} \supseteq C$.
	\begin{align*}
		\partial (A \uplus B) \cap ((A \uplus B) \times C)
		&= \set{ v_1v_2 \in E | v_1 \in A \uplus B, v_2 \in \overline{A \uplus B}}
		\cap ((A \uplus B) \times C)
		\\
		&\overann={\overline{A \uplus B} \supseteq C}
		\set{v_1v_2 \in E | v_1 \in A \uplus B, v_2 \in C}
		\\
		&= \set{v_1v_2 \in E | v_1 \in A, v_2 \in C}
		\uplus \set{v_1v_2 \in E | v_1 \in B, v_2 \in C}
		\\
		&= \partial A \cap (A \times C) \uplus \partial B \cap (B \times C) \qedhere 
	\end{align*}
\end{proof}
We need those rules for the correctness proof of the join nodes as well as for showing another rule in the following 
paragraph.

For the special case that $B$ contains exactly one element in the first rule of the previous statement, we get the following rule by incorporating \autoref{stmt:reduce}, too:
\begin{corollary} \label{l2}
	For all $v$ and all $S$ where $v \notin S$ it holds that
	\[
		\partial(S \uplus \sing v)
		= \partial S \cap \overline{\sing v}^2
		\uplus \partials v \cap \overline S^2
	.\stopphier \]
\end{corollary}
\begin{proof}
	\begin{align*}
		\partial(S \uplus \sing v)
		&\overann={\autodsubref{l4}{two}}
		\partial S \cap (S \times \overline{\sing v})
		\uplus
		\partials v \cap (\sing v \times \overline S)
		\\
		&\underann={\text{\autoref{stmt:expand}, using $S \subseteq \overline {\sing v}$
		and $\sing v \subseteq \overline S$}}
		\partial S \cap \overline{\sing v}^2
		\uplus
		\partials v \cap \overline S^2 \qedhere
	\end{align*}
\end{proof}
This rule is very important as our recurrences are usually
(except join nodes) built by inductively adding vertices
to the selection set $S$.  
	\subsection{Leaf Node} 
	\begin{lemma}[{Correctness of \autoref{eq:leaf:1}}]
	\label{stmt:correctness:eq:leaf:1}
	Let $i$ be a leaf node and $v \in X_i$. Then for all $S \subseteq X_i$ we have
	\[
		\Gamma_i(0, \sing v) \equiv
		\tup 1 {\sum_{\substack{v' \in X_i \\ vv' \in E}} w(vv')}.
		\tag*{(\ref*{eq:leaf:1})} 
	\]
\end{lemma}
\begin{proof}
We can simply plug in the definition of $\Gamma$ and make use of $Y_i = X_i$ as follows:
	\begin{multline*}
		\Gamma_i(0, \sing v)
		= \bigsqcup_{ \hat S \in \binom {F_i} 0}
			\tup{\abs {\sing v}}{\alpha_i (\partial (\sing v \uplus \hat S))}
		=	\tup1{\alpha_i (\partials v)}
		\\ =   \tup1{\alpha(\partials v \cap Y_i^2)}
		\underann={Y_i = X_i} 
		    \tup1{\alpha(\partials v \cap X_i^2)}  
		= 	\sum_{\substack{v' \in X_i \\ vv' \in E}} w(vv') \qedhere
	\end{multline*}
\end{proof}

\begin{lemma}[{Correctness of \autoref{eq:leaf:2}}]
	\label{stmt:correctness:eq:leaf:2}
	Let $i$ be a leaf node and $v \in X_i$. Then for all $S \subseteq X_i$ and all $\emptyset \neq S' \subseteq X_i \setminus \sing v$  we have
		\[ \Gamma_i(0, S' \uplus \sing v) \equiv \Gamma_i(0, S')
				\oplus \tup1 {\sum_{\substack{v' \in X_i \setminus S' \\ vv' \in E}} w(vv')
				- \sum_{\substack{v' \in S' \\ v'v \in E}} w(v'v) }
			\tag*{(\ref*{eq:leaf:2})} 
		\]
\end{lemma}
\begin{proof}
	We start by plugging in the definition of $\Gamma$:
\begin{equation} \label{proof:correctness:leaf:ii:1}
		\Gamma_i(0, S' \uplus \sing v)
		= \bigsqcup_{
		\hat S \in \binom {F_i} 0
		} \tup{\abs {S' \uplus \sing v}}{\alpha_i \enpar{\partial \enpar{S' \uplus \sing v}}}
		= \tup{1 + \abs {S'}}{\alpha_i\enpar{\partial\enpar{S' \uplus \sing v}}}
	\end{equation}
	The question is now: how can we show that the size of the cut changes (in comparison to the size of the
	cut when only $S'$ is selected instead of $S' \uplus \sing v$)
	as claimed in the statement, that is
	\[	
		\alpha_i(\partial(S' \uplus \sing v)))
		=
		\sum_{\substack{v' \in X_i \setminus S' \\ vv' \in E}} w(vv')
		- \sum_{\substack{v' \in S' \\ v'v \in E}} w(v'v)?
	\]
	To answer this, we explicitly consider the change
	of the cut, that is
	\begin{align*}
		\alpha_i(\partial(S' \uplus \sing v)) - \alpha_i(\partial S')
		&\overann={\autodref{l2}}
		\alpha(\partial S' \cap (X_i\setminus \sing v)^2
		 \uplus \partials v \cap (X_i \setminus S')^2)
		- \alpha_i(\partial S')
\\
		&\overann={\autodref{rule:alpha}}
		\alpha(\partial S' \cap (X_i\setminus \sing v)^2)
		+ \alpha(\partials v \cap (X_i \setminus S')^2)
		- \alpha_i(\partial S')
		\\
		&\overann={\autodref{l2}}
		\alpha(\partial S' \cap (X_i\setminus \sing v)^2)
		+ \alpha(\partials v \cap (X_i \setminus S')^2)
		\\&\phantom{{}={}}
		- \alpha(\partial S' \cap (X_i \times (X_i \setminus \sing v)) \uplus
		\partial S' \cap (X_i \times \sing v))
\\
		&\overann={\autodref{rule:alpha}}
		\alpha(\partial S' \cap (X_i\setminus \sing v)^2)
		+ \alpha(\partials v \cap (X_i \setminus S')^2)
		\\&\phantom{{}={}}
		- \alpha(\partial S' \cap (X_i \times (X_i \setminus \sing v)))
		- \alpha(\partial S' \cap (X_i \times \sing v))
		\\ &\overann={
			v \in X_i \setminus S'
			\komma S' \subseteq X_i
			\komma \autodref{stmt:reduce}}
		\alpha(\partial S' \cap (X_i \times (X_i\setminus \sing v)))
		+ \alpha(\partials v \cap (X_i \setminus S')^2)
		\\&\phantom{{}={}}
		- \alpha(\partial S' \cap (X_i \times (X_i \setminus \sing v)))
		- \alpha(\partial S' \cap (X_i \times \sing v))
\\ &=
		\alpha(\partials v \cap (X_i \setminus S')^2)
		- \alpha(\partial S' \cap (X_i \times \sing v))
\\ &= \sum_{\substack{v' \in X_i \setminus S' \\ vv' \in E}} w(vv')
		- \sum_{\substack{v' \in S' \\ v'v \in E}} w(v'v).
	\end{align*}
	We can use this and \autoref{proof:correctness:leaf:ii:1} to
	deduce
	\begin{align*}
		\Gamma_i(0, S' \uplus \sing v)
		&\overann={\autodref{proof:correctness:leaf:ii:1}} \tup{1+\abs {S'}}{\alpha_i(\partial(S' \uplus \sing v))}
		\\ &= \tup{1 + \abs {S'}}{\alpha_i(\partial(S' \uplus \sing v)) - \alpha_i(\partial S') + \alpha_i(\partial S')}
		\\ &= \tup{\abs {S'}}{\alpha_i(\partial S')}
		\oplus
		\tup1{\alpha_i(\partial(S' \uplus \sing v)) - \alpha_i(\partial S')}
		\\ &\overann={\text{Statement above}}
		\tup{\abs {S'}}{\alpha_i(\partial S')}
		\oplus \tup1
		{\sum_{\substack{v' \in X_i \setminus S' \\ vv' \in E}} w(vv')
		- \sum_{\substack{v' \in S' \\ v'v \in E}} w(v'v)}
		\\ &=
		\Gamma_i(0, S')
				\oplus \tup1 {\sum_{\substack{v' \in X_i \setminus S' \\ vv' \in E}} w(vv')
				- \sum_{\substack{v' \in S' \\ v'v \in E}} w(v'v) }
		\qedhere
	\end{align*}
\end{proof} 
	\subsection{Forget Node}
	\begin{lemma}[Correctness of \autoref{eq:forget}]
Let $i$ be a forget node with child $j$. Let $v \in V$ such that
$X_i \uplus \sing v = X_j$. 
Then for all $S \subseteq X_i$ and all $0 \leq \ell \leq \abs {F_i}$ it holds
that 
\[
	\Gamma_i(\ell, S) \equiv \begin{cases}
		\Gamma_j(\ell-1, S \uplus \sing v) \sqcup \Gamma_j(\ell, S) & \ell \geq 1 \\
		\Gamma_j(\ell, S) & \ell = 0
		\end{cases}.
		\tag*{(\ref*{eq:forget})}
\]
\end{lemma}
\begin{proof}
	Fix $S \subseteq X_i$ and $0 \leq \ell \leq \abs {F_i}$.
	\begin{align*}
		\Gamma_i(\ell, S)
		  &= \bigsqcup_{
			\hat S \in \binom {F_i} \ell 
		} \tup {\ell + \abs S} {\alpha_i (\partial (S \uplus \hat S))}
		\\&\overann={F_i = F_j \uplus \sing v}
		\bigsqcup_{
			\hat S \in \binom {F_j \uplus \sing v} \ell 
		} \tup {\ell + \abs S} {\alpha_i (\partial (S \uplus \hat S))}
		\intertext{Now we make a case distinction wether $v \in \hat S$; if $v \in \hat S$ we may select $\ell -1$ further vertices from $F_j \setminus \sing v$, otherwise we may select $\ell$ vertices from $F_j \setminus \sing v$. This allows us to split the supremum into the supremum of two suprema:}
		&= \left( \bigsqcup_{
			\hat S \in \binom {F_j} {\ell-1} 
		} 
			\tup {\ell + \abs S} {\alpha_i (\partial (S \uplus \hat S \uplus \sing v))}
		\right) \sqcup \left(
		\bigsqcup_{
			\hat S \in \binom {F_j} \ell 
		} \tup{\ell + \abs S} {\alpha_i (\partial (S \uplus \hat S))}
		\right)
		\intertext{Note that the left supremum is applied on an empty set if $\ell = 0$ (and thus vanishes in that case, as $\bigsqcup \emptyset = \bot$); incorporating this and using the definition of $\Gamma_j$ we eventuall get}
		&= \begin{cases}
			\B_j(\ell-1, S \uplus \sing v) \sqcup \B_j(\ell, S) & \ell \geq 1 \\
			\B_j(\ell, S) & \ell = 0
		\end{cases}.
	\qedhere\end{align*}
\end{proof} 
	\subsection{Introduce Node}
	\begin{lemma}[Correctness of \autoref{eq:introduce:1}]
	\label{stmt:eq:introduce:1}
	Let $i$ be an introduce node with child $j$. Let $v \in V$ such that
	$X_i = X_j \uplus \sing v$. Then for all $S \subseteq X_i$ where $v \in S$
	and all $0 \leq \ell \leq \abs{F_i}$ it holds that 
	\begin{equation*}
	\Gamma_i(\ell, S) \equiv
		\tup1{\alpha_i(\partials v \cap (\sing v \times (X_i \setminus S)))}
		\oplus
		\Gamma_j(\ell, S \smin v)
		\tag{\ref*{eq:introduce:1}}
	.\end{equation*}
\end{lemma}
\begin{proof}
Fix $S \subseteq X_i$ and $0 \leq \ell \leq \abs {F_i}$. By the definition of $\Gamma$ we have
	\begin{align*}
		\Gamma_i(\ell, S)
		&= \bigsqcup_{\hat S \in \binom{F_i}{\ell}}
		\tup{\ell + \abs{S}}{\alpha_i(\partial(S \uplus \hat S))}
	\end{align*}
	We want to focus on the $\alpha_i(\dots)$ expression in the equation above.
	Therefor, let $S' \defeq S \setminus \sing v$.
	Consider the
	term $\alpha_i(\partial(S \uplus \hat S))$.
	For the sake of shortness, let $\tilde S \defeq S' \uplus \hat S$.
We then have
	\begin{align*}
		\alpha_i(\partial(\tilde S \uplus \sing v))
		&\overann={\autodref{l2}} \alpha_i(\partial \tilde S \cap \overline{\sing v}^2
		\uplus \partials v \cap \overline{\tilde S}^2)
		\\ &\overann={\autodref{rule:alpha}}
		\alpha_i(\partial \tilde S \cap \overline{\sing v}^2)
		+ \alpha_i(\partials v \cap \overline {\tilde S}^2)
		\\
		&=
		\alpha(\partial \tilde S \cap \overline{\sing v}^2 \cap Y_i^2)
		+
		\alpha(\partials v \cap \overline {\tilde S}^2 \cap Y_i^2)
		\\
		&\overann={\overline {\sing v} = Y_i \setminus \sing v \komma \overline {\tilde S} = Y_i \setminus \tilde S}
		\alpha(\partial \tilde S \cap (Y_i \setminus \sing v)^2)
		+ \alpha(\partials v \cap (Y_i \setminus \tilde S)^2) 
		\intertext{Now note that $v$ has just been introduced; due to
		\subref{defi:tree-decomposition}{edge_coverage} and
		\subref{defi:tree-decomposition}{coherence}
		there cannot be an edge from $v$ to any vertex in $F_i$, thus, using
		$Y_i = X_i \uplus F_i$, we have}
		&= \alpha(\partial \tilde S \cap (X_i \setminus \sing v)^2)
		+ \alpha(\partials v \cap (X_i \setminus \tilde S)^2)
		\\
		&\overann={X_i = X_j \uplus \sing v}
		\alpha(\partial \tilde S \cap X_j^2)
		+ \alpha(\partials v \cap (X_i \setminus \tilde S)^2)
		\\
		&\overann={\tilde S \subseteq X_i \komma \autodref{stmt:reduce}}
		\alpha_j(\partial \tilde S) + \alpha(\partials v \cap (\sing v \times (X_i \setminus \tilde S)))
\\
&\overann={\tilde S = S' \uplus \hat S}
		\alpha_j(\partial \tilde S)
		+ \alpha(\partials v \cap (\sing v \times (X_i \setminus (S' \uplus \hat S))))
		\\
		&\overann={\hat S \subseteq F_i \implies \hat S \cap X_i = \emptyset}
		\alpha_j(\partial \tilde S)
		+ \alpha(\partials v \cap (\sing v \times (X_i \setminus S')))
		\\
		&\overann={vv \notin E \komma \autodref{stmt:reduce} \komma S' = S \setminus \sing v}
		\alpha_j(\partial \tilde S)
		+ \alpha(\partials v \cap (\sing v \times (X_i \setminus S)))
	\end{align*}
	Hence, we can rewrite $\Gamma_i(\ell, S)$ as follows:
	\begin{align*}
		\Gamma_i(\ell, S)
		&= \bigsqcup_{\hat S \in \binom{F_i}{\ell}}
		\tup{\underbrace{\ell + \abs{S}}_{=\mathrlap{\;1 + \ell + \abs{S'}}}}{\alpha_i(\partial(S \uplus \hat S))}
		\\
		&\overann={F_i = F_j}
		\bigsqcup_{\hat S \in \binom{F_j}{\ell}}
		\tup{1 + \ell + \abs{S'}}{\alpha_i(\partial (S' \uplus \hat S))} 
		\\
		&\overann={\text{Statement above}}
		\bigsqcup_{\hat S \in \binom{F_j}{\ell}}
		\tup{1 + \ell + \abs{S'}}{\alpha_j(\partial (S' \uplus \hat S))
		+ \alpha(\partials v \cap (\sing v \times (X_i \setminus S')))} 
		\\
		&=
		\bigsqcup_{\hat S \in \binom{F_j}{\ell}}
		\enpar*{
			\tup{1}{\alpha(\partials v \cap (\sing v \times (X_i \setminus S')))}
			\oplus
			\tup{\ell + \abs{S'}}{\alpha_j(\partial (S' \uplus \hat S))}
		}
		\\
		&\overann={\autodref{stmt:compatible}}
		\tup{1}{\alpha(\partials v \cap (\sing v \times (X_i \setminus S')))}
		\oplus
		\bigsqcup_{\hat S \in \binom{F_j}{\ell}}
		\tup{\ell + \abs{S'}}{\alpha_j(\partial (S' \uplus \hat S))}
		\\&=
		\tup{1}{\alpha(\partials v \cap (\sing v \times (X_i \setminus S')))}
		\oplus \Gamma_j(\ell, S')
		\\&\overann={S' = S \setminus \sing v}
		\tup{1}{\alpha(\partials v \cap (\sing v \times (X_i \setminus S')))}
		\oplus \Gamma_j(\ell, S \setminus \sing v) \qedhere
	\end{align*}
\end{proof} 	\begin{lemma}[Correctness of \autoref{eq:introduce:2}]
	\label{stmt:eq:introduce:2}
	Let $i$ be an introduce node with child $j$. Let $v \in V$ such that
	$X_i = X_j \uplus \sing v$. Then for all $S \subseteq X_i$ where $v \notin S$
	and all $0 \leq \ell \leq \abs{F_i}$ it holds that 
	\begin{equation*}
		\Gamma_i(\ell ,S) \equiv 
		\tup{0}{\alpha_i(\partial S \cap (X_i\times \sing v))}
		\oplus \Gamma_j(\ell, S)
		\tag{\ref*{eq:introduce:2}}
	.\end{equation*}
\end{lemma}
\begin{proof}
	Fix $S \subseteq X_i$ such that $v \in S$, and $0 \leq \ell \leq \abs {F_i}$. By the definition of $\Gamma$ we have
	\begin{align*}
		\Gamma_i(\ell, S)
		&= \bigsqcup_{\hat S \in \binom{F_i}{\ell}}
		\tup{\ell + \abs{S}}{\alpha_i(\partial(S \uplus \hat S))}.
	\end{align*}
	We want to focus on the $\alpha_i(\dots)$ expression. Note that $\alpha_i(\partial(S \uplus \hat S)) = \alpha(\partial \tilde S \cap Y_i^2)$ where $\tilde S \defeq S \uplus \hat S$, thus we will
	have a closer look at the term
	$\tilde S \cap Y_i^2$.
	\begin{align*}
		\partial \tilde S \cap Y_i^2 
		&\overann={\autodref{stmt:reduce}} \partial \tilde S \cap (\tilde S \times (Y_i \setminus \tilde S))
		\\&\overann={Y_i = Y_j \uplus \sing v} \partial \tilde S \cap (\tilde S \times ((Y_j \uplus \sing v) \setminus \tilde S))
		\\& \overann={v \notin \tilde S}
		\partial \tilde S \cap (\tilde S \times ((Y_j \setminus \tilde S) \uplus \sing v))
		\\&= \partial \tilde S \cap (\tilde S \times (Y_j \setminus \tilde S))
		\uplus \partial \tilde S \cap (\tilde S \times \sing v)
		\\&= \partial \tilde S \cap (\tilde S \times (Y_j \setminus \tilde S))
		\uplus \partial \tilde S \cap (X_i \times \sing v)
		\\&\overann={\tilde S \subseteq Y_j \komma \autodref{stmt:reduce}} \partial \tilde S \cap Y_j^2 
		\uplus \partial \tilde S \cap (X_i \times \sing v).
	\end{align*}
	Now focus on the second term of the union above:
	\begin{align*}
		\partial \tilde S \cap (X_i \times \sing v)
		&\overann={\tilde S = S \uplus \hat S} \partial (S \uplus \hat S) \cap (X_i \times \sing v)
		\\&\overann={\autodsubref{l4}{two}} (\partial S \cap (S \times \overline {\hat S})
		\uplus \partial \hat S \cap (\hat S \times \overline S))
		\cap (X_i \times \sing v)
		\\&=
		\partial S \cap (S \times \overline{\hat S}) \cap (X_i \times \sing v)
		\uplus
		\partial \hat S \cap (\hat S \times \overline S) \cap (X_i \times \sing v)
		\\&=
		\partial S \cap ((\underbrace{S \cap X_i}_{=\mathrlap{\;S}}) \times (\underbrace{\overline {\hat S} \cap \sing v}_{= \mathrlap{\;\sing v}}) )
		\uplus
		\partial \hat S \cap ((\underbrace{\hat S \cap X_i}_{= \mathrlap{\;\emptyset}}) \times (\overline S \cap \sing v))
		\\ &=
\partial S \cap (S \times \sing v)
		\\ &\overann={S \subseteq X_i \komma \autodref{stmt:reduce}}
		\partial S \cap (X_i \times \sing v).
	\end{align*}
	Combing both just shown equalities we get that
	\[
		\partial \tilde S \cap Y_i^2
		= \partial \tilde S \cap Y_j^2
		\uplus \partial S \cap (X_i \times \sing v).
	\]
	We can thus deduce, using \autoref{rule:alpha}, that
	\[
		\alpha_i(\partial \tilde S)
		= \alpha_j(\partial \tilde S)
		+ \alpha(\partial S \cap (X_i \times \sing v)).
	\] 
	This allows us to rewrite $\Gamma_i(\ell, S)$ as follows:
	\begin{align*}
		\Gamma_i(\ell, S)
		&= \bigsqcup_{\hat S \in \binom{F_i}{\ell}}
		\tup{\ell + \abs{S}}{\alpha_i(\partial(S \uplus \hat S))}
		\\
		&\overann={\text{Statement above}} \bigsqcup_{\hat S \in \binom{F_i}{\ell}}
		\tup{\ell + \abs{S}}{\alpha_j(\partial (S \uplus \hat S))
		+ \alpha(\partial S \cap (X_i \times \sing v))}
		\\&= \bigsqcup_{\hat S \in \binom{F_i}{\ell}}
		\enpar*{
		\tup{\ell + \abs{S}}{\alpha_j(\partial (S \uplus \hat S))}
		\oplus \tup0{\alpha(\partial S \cap (X_i \times \sing v))}
		}
		\\&\overann={\text{\autoref{stmt:compatible}}}
		\tup0{\alpha(\partial S \cap (X_i \times \sing v))}
		\oplus
		\bigsqcup_{\hat S \in \binom{F_i}{\ell}}
		\tup{\ell + \abs{S}}{\alpha_j(\partial (S \uplus \hat S))}
		\\&\overann={F_i = F_j}
		\tup0{\alpha(\partial S \cap (X_i \times \sing v))}
		\oplus
		\bigsqcup_{\hat S \in \binom{F_j}{\ell}}
		\tup{\ell + \abs{S}}{\alpha_j(\partial (S \uplus \hat S))}
		\\&= 	\tup0{\alpha(\partial S \cap (X_i \times \sing v))}
		\oplus \Gamma_j(\ell, S) \qedhere
	\end{align*}
\end{proof} 	\begin{lemma}[Correctness of \autoref{eq:introduce:3}]
	\label{stmt:eq:introduce:3}
	Let $i$ be an introduce node with child $j$. Let $v \in V$ such that
	$X_i = X_j \uplus \sing v$. Then for all $S \subseteq X_i$
	and all $0 \leq \ell \leq \abs{F_i}$ it holds that 
	\begin{equation*}
		\pi_2 \tuplike{\Gamma_i(\ell, S) \ominus \Gamma_j(\ell, S \cap X_j)}
		= \begin{dcases}
			\alpha_i(\partials v \cap (\sing v \times (X_i \setminus S))) & v \in S \\
			\alpha_i(\partial S \cap (X_i\times \sing v)) & v \notin S 
		\end{dcases}.
		\tag{\ref*{eq:introduce:3}}
	\end{equation*}
\end{lemma}
\begin{proof}
	Fix $S \subseteq X_i$ and $0 \leq \ell \leq \abs{F_i}$.
	If $v \in S$, then by \autoref{stmt:eq:introduce:1}
	\[
		\Gamma_i(\ell, S) \equiv
		\tup1{\alpha_i(\partials v \cap (\sing v \times (X_i \setminus S)))}
		\oplus
		\Gamma_j(\ell, S \smin v)
	\]
	and hence
	\begin{align*}
		&\pi_2\tuplike{\Gamma_i(\ell, S) \ominus \Gamma_j(\ell, S \cap X_j)}
		\\ &= \pi_2\enpar*{\tup1{\alpha_i(\partials v \cap (\sing v \times (X_i \setminus S)))}
		\oplus \Gamma_j(\ell, S \setminus \sing v)
		\ominus \Gamma_j(\ell, S \cap X_j)}
		\\ &= \pi_2\tup1{\alpha_i(\partials v \cap (\sing v \times (X_i \setminus S)))}
		\oplus \pi_2\tuplike{\Gamma_j(\ell, S \setminus \sing v)}
		\ominus \pi_2\tuplike{\Gamma_j(\ell, S \cap X_j)}
		\\ &\overann={v \in S \implies S \cap X_j = S \setminus \sing v}
		\pi_2\tup1{\alpha_i(\partials v \cap (\sing v \times (X_i \setminus S)))}
		\oplus \pi_2\tuplike{\Gamma_j(\ell, S \setminus \sing v)}
		\ominus \pi_2\tuplike{\Gamma_j(\ell, S \setminus \sing v)}
		\\ &=
		\pi_2\tup1{\alpha_i(\partials v \cap (\sing v \times (X_i \setminus S)))}
		\\ &= \alpha_i(\partials v \cap (\sing v \times (X_i \setminus S))).
	\end{align*}
	Now let $v \notin S$. Then, by \autoref{stmt:eq:introduce:2},
	\begin{align*}
		&\pi_2\tuplike{\Gamma_i(\ell, S) \ominus \Gamma_j(\ell, S \cap X_j)}
		\\ &= \pi_2\enpar*{\tup{0}{\alpha_i(\partial S \cap (X_i\times \sing v))}
		\oplus \Gamma_j(\ell, S )
		\ominus \Gamma_j(\ell, S \cap X_j)}
		\\ &\overann={v \notin S \implies S \cap X_j = S} \pi_2\enpar*{\tup{0}{\alpha_i(\partial S \cap (X_i\times \sing v))}
		\oplus \Gamma_j(\ell, S )
		\ominus \Gamma_j(\ell, S)}
		\\ &= \pi_2{\tup{0}{\alpha_i(\partial S \cap (X_i\times \sing v))}}
		\oplus \pi_2\tuplike{\Gamma_j(\ell, S )}
		\ominus \pi_2\tuplike{\Gamma_j(\ell, S)}
		\\ &= 
		\pi_2{\tup{0}{\alpha_i(\partial S \cap (X_i\times \sing v))}}
		\\ &= \alpha_i(\partial S \cap (X_i\times \sing v)). \qedhere
	\end{align*}
\end{proof} 
	\subsection{Join Node}
	In order to prove the correctness for join nodes,
we first need an auxiliary statement.
The next proposition is meant to deal
with the aforementioned \enquote{doubly counted edges}
when building the maximum. Recall that for a join node we claimed that
\[
	\Gamma_i(\ell, S) \equiv \bigg(\bigsqcup_{\substack{
			0 \leq \ell_1 \leq \abs{F_j} \\
			0 \leq \ell_2 \leq \abs{F_k} \\
			\ell_2 + \ell_1 = \ell
		}}
		\tuplike{
			\Gamma_j(\ell_1, S)
		\oplus
			\Gamma_k(\ell_2, S)
		}
		\bigg) \ominus \tup{ \abs S }{ \alpha_i(\partial S) }
		\tag*{(\ref*{eq:join:1})}
	.
\]
When considering the special case of Max-Bisection, this
simplifies to
	\[
		\Gamma_i(\ell, S) \equiv \bigg(\max_{\substack{
			0 \leq \ell_1 \leq \abs{F_j} \\
			0 \leq \ell_2 \leq \abs{F_k} \\
			\ell_2 + \ell_1 = \ell
		}}
		\tuplike{
			\Gamma_j(\ell_1, S)
		+
			\Gamma_k(\ell_2, S)
		}
		\bigg) - \alpha_i(\partial S).
	\]
	The intuition behind the subtraction is somewhat clear: If $\Gamma_j(\ell_1, S)$ is the largest possible cut in $G[Y_j]$ using $S$ and further $\ell_1$ vertices, and $\Gamma_k(\ell_2, S)$ is the largest possible cut in $G[Y_k]$ using $S$ and further $\ell_2$ vertices, their sum includes some summands twice:
	The term $\alpha_i(S)$. This is because all other edges in the largest cut have at least on endpoint that is only in $Y_j$ or only in $Y_k$ (outside $X_i$). The following proposition now shows this formally;
	we need it for the proof of the correctness of the
	equation for join nodes in the \emph{general} case
	as well as for the specific case for Max-Bisection.

\begin{proposition} \label{stmt:join:propo}
	Let $i$ be a join node with left child $j$ and right child $j$.
	Then for all $S \subseteq X_i$, $\hat S_1 \subseteq F_j$,
	and $\hat S_2 \subseteq F_k$ we have
	\[
		\alpha_i(\partial(S \uplus \hat S_1 \uplus \hat S_2))
		= \alpha_j(\partial (S \uplus \hat S_1))
		+ \alpha_k(\partial (S \uplus \hat S_2))
		- \alpha_i(\partial S \cap X_i^2) .\stopp
	\]
\end{proposition}
\begin{proof} Fix $S \subseteq X_i$, $\hat S_1 \subseteq F_j$, and $\hat S_2 \subseteq F_k$.
	We claim that it suffices to show that
	\begin{enumerate}[(i)]
		\item Show that $\partial(S \uplus \hat S_1 \uplus \hat S_2) \cap Y_i^2 = \partial(S \uplus \hat S_1) \cap Y_j^2 \cup \partial(S \uplus \hat S_2) \cap Y_k^2$; and
		\item $\partial(S \uplus \hat S_1) \cap Y_j^2 \cap \partial(S \uplus \hat S_2) \cap Y_k^2
		= \partial S \cap X_i^2$.
	\end{enumerate}
	To see this, assume that both statements hold. We then have
	\begin{align*}
		\,\,\,\,&\!\!\!\!\alpha_i(\partial S \uplus \hat S_1 \uplus \hat S_2)
		\\&=
		\sum_{e \in (\partial S \uplus \hat S_1 \uplus \hat S_2) \cap Y_i^2} w(e)
		\\ &\overann={\text{(i)}}
		\sum_{e \in \partial(S \uplus \hat S_1) \cap Y_j^2 \cup \partial(S \uplus \hat S_2) \cap Y_k^2} w(e)
		\\&=
		\sum_{e \in \partial(S \uplus \hat S_1) \cap Y_j^2} w(e)
		+
		\sum_{e \in \partial(S \uplus \hat S_2) \cap Y_k^2} w(e)
		-
		\sum_{e \in \partial(S \uplus \hat S_1) \cap Y_j^2 \cap \partial(S \uplus \hat S_2) \cap Y_k^2} w(e)
		\\&= \alpha(\partial(S \uplus \hat S_1) \cap Y_j^2) + \alpha(\in \partial(S \uplus \hat S_2) \cap Y_k^2)
		- \alpha(\partial(S \uplus \hat S_1) \cap Y_j^2 \cap \partial(S \uplus \hat S_2) \cap Y_k^2)
		\\&\overann={\text{(ii)}}
		\alpha(\partial(S \uplus \hat S_1) \cap Y_j^2) + \alpha(\partial(S \uplus \hat S_2) \cap Y_k^2)
		- \alpha(\partial S \cap X_i^)
		\\&=\alpha_j(\partial (S \uplus \hat S_1))
		+ \alpha_k(\partial (S \uplus \hat S_2))
		- \alpha_i(\partial S \cap X_i^2).
	\end{align*}
	In order to prove (i) and (ii), we approach as follows: We simplify $\partial(S \uplus \hat S_1 \uplus \hat S_2) \cap Y_i^2$, 
	$\partial(S \uplus \hat S_1) \cap Y_j^2$ and $\partial(S \uplus \hat S_2) \cap Y_k^2$ each to a disjunct union of simple terms. We the compare those \enquote{summands} to prove our goals.
	We start with the left hand side of the equation in (i), that is, the term
	$\partial (S \uplus \hat S_1 \uplus \hat S_2) \cap Y_i^2$:
\begin{align*}
	\,\,\,\,&\!\!\!\!\partial (S \uplus \hat S_1 \uplus \hat S_2) \cap Y_i^2
	\\&\overann={\autodref{stmt:reduce}} \partial (S \uplus \hat S_1 \uplus \hat S_2)
	\cap \big(
	(S \uplus \hat S_1 \uplus \hat S_2) \times (Y_i \setminus (S \uplus \hat S_1 \uplus \hat S_2))
	\big)
	\\
	&\overann={\text{Def $Y_i$}} \partial (S \uplus \hat S_1 \uplus \hat S_2)
	\cap
	\big(
	(S \uplus \hat S_1 \uplus \hat S_2) \times ((X_i \uplus F_i)) \setminus (S \uplus \hat S_1 \uplus \hat S_2))
	\big)
	\\
	&\overann={\text{$i$ is join node}} \partial (S \uplus \hat S_1 \uplus \hat S_2)
	\cap \big(
	(S \uplus \hat S_1 \uplus \hat S_2) \times ((X_i \uplus F_j \uplus F_k)) \setminus (S \uplus \hat S_1 \uplus \hat S_2))
	\big)
	\\
	&\overann={S \subseteq X_i \komma \hat S_1 \subseteq F_j \komma \hat S_2 \subseteq F_k} \partial (S \uplus \hat S_1 \uplus \hat S_2)
	\cap \big(
	(S \uplus \hat S_1 \uplus \hat S_2) \times ((X_i \setminus S) \uplus (F_j \setminus \hat S_1) \uplus (F_k \setminus \hat S_2))
	\big)
	\\
	&\overann={{\autodsubref{l4}{three:left} \komma \autodsubref{l4}{three:right}}}
	\partial S \cap (S \times (X_i \setminus S))
	\uplus \partial S \cap (S \times (F_j \setminus \hat S_1))
	\uplus \partial S \cap (S \times (F_k \setminus \hat S_2))
	\\&\phantom{{}={}}
	\uplus \partial \hat S_1 \cap (\hat S_1 \times (X_i \setminus S))
	\uplus \partial \hat S_1 \cap (\hat S_1 \times (F_j \setminus \hat S_1))
	\uplus \underbrace{
		\partial \hat S_1 \cap (\hat S_1 \times (F_k \setminus \hat S_2))
	}_{=\emptyset}
	\\&\phantom{{}={}}
	\uplus \partial \hat S_2 \cap (\hat S_2 \times (X_i \setminus S))
	\uplus \underbrace{
		\partial \hat S_2 \cap (\hat S_2 \times (F_j \setminus \hat S_1))
	}_{=\emptyset}
	{}\uplus{} \partial \hat S_2 \cap (\hat S_2 \times (F_k \setminus \hat S_2)) 
	\tag{H1}
	\intertext{Let us now consider the first expression on the right side of the equation of (i):}
		\,\,\,\,&\!\!\!\!\partial (S \uplus \hat S_1) \cap Y_j^2
		\\&\overann={\autodref{stmt:reduce}}
		\partial (S \uplus \hat S_1) \cap
		\big(
		(S \uplus \hat S_1) \times (Y_j \setminus (S \uplus \hat S_1))
		\big)
		\\
		&=
		\partial (S \uplus \hat S_1) \cap
		\big(
		(S \uplus \hat S_1) \times ((F_j \uplus X_j) \setminus (S \uplus \hat S_1))
		\big)
		\\
		&\overann={X_i = X_j}
		\partial (S \uplus \hat S_1) \cap
		\big(
		(S \uplus \hat S_1) \times ((F_j \uplus X_i) \setminus (S \uplus \hat S_1))
		\big)
		\\
		&\overann={S \subseteq X_i \komma \hat S_1 \subseteq F_j}
		\partial (S \uplus \hat S_1) \cap
		\big(
		(S \uplus \hat S_1) \times ((X_i \setminus S) \uplus (F_j \setminus \hat S_1))
		\big)
		\\
		&\overann={{\autodsubref{l4}{three:left} \komma \autodsubref{l4}{three:right}}}
		       \partial S \cap (S \times (X_i \setminus S))
		\uplus \partial S \cap (S \times (F_j \setminus \hat S_1))
		\\&\phantom{{}={}}
		\uplus \partial \hat S_1 \cap (\hat S_1 \times (X_i \setminus S))
		\uplus \partial \hat S_1 \cap (\hat S_1 \times (F_j \setminus \hat S_1)) \tag{H2}
	\intertext{Analogously, we can also show}
	\,\,\,\,&\!\!\!\!\partial (S \uplus \hat S_1) \cap Y_k^2
	\\&=
		       \partial S \cap (S \times (X_i \setminus S))
		\uplus \partial S \cap (S \times (F_k \setminus \hat S_2))
		\\&\phantom{{}={}}
		\uplus \partial \hat S_2 \cap (\hat S_2 \times (X_i \setminus S))
		\uplus \partial \hat S_2 \cap (\hat S_2 \times (F_k \setminus \hat S_2))
		\tag{H3}
	\end{align*}

	Now observe that each \enquote{summand} (that is not the empty set) in (H1) appears in (H2) and (H3), and vice versa. This proves (i).

	For (ii), let us compare the \enquote{summands} of
	(H2) and (H3). First observe, that each pair of those has to be identical or disjunct. This is due to the fact
	that all \enquote{summands} of (H2) and (H3) also
	appear in (H1), as we observed above, and (H1) is a
	disjunct union. It is now not hard to see which pair of summands are identical and which are disjunct: Any \enquote{summand} starting with $\partial \hat S_1$ (recall that $\hat S_1 \subseteq F_j)$ in (H2) cannot occur in (H3)
	due to coherence (unless the \enquote{summand} is $\emptyset$), and analogously, no \enquote{summand} starting
	with $\partial \hat S_2$ (recall that $\hat S_2 \subseteq F_k$) in (H3) can occur in (H2). Moreover, the same applies to \enquote{summands} in (H2) that are intersected with a subset of $V \times F_j$;
	and analogously, to \enquote{summands} in (H3) that are intersected with a subset of $V \times F_k$.
	The only \enquote{summand} left is $\partial S \cap (S \times (X_i \setminus S))$, and this summand occurs (in this very syntactical form) in both, (H2)
	and (H3). This shows (ii) and hence eventually the claim. \placeqed
\end{proof}
\begin{lemma}[Correctness of \autoref{eq:join:1}]
Let $i$ be a join node with left child $j$ and right child $j$.
Then for all $0 \leq \ell \leq \abs {F_i}$ and all $S \subseteq X_i$ we have
\[ 
	\Gamma_i(\ell, S) \equiv \bigg(\bigsqcup_{\substack{
		0 \leq \ell_1 \leq \abs{F_j} \\
		0 \leq \ell_2 \leq \abs{F_k} \\
		\ell_2 + \ell_1 = \ell
	}}
	\tuplike{
		\Gamma_j(\ell_1, S)
	\oplus
		\Gamma_k(\ell_2, S)
	}
	\bigg) \ominus \tup{\abs S}{\alpha_i(\partial S)}.
	\tag*{(\ref*{eq:join:1})}
\]
\end{lemma}
\begin{proof} Fix $S \subseteq X_i$ and $0 \leq \ell \leq \abs{F_i}$. By the definition of $\Gamma$ we have:
	\begin{align*} 
		\,\,\,\,&\!\!\!\! \B_i(\ell, S) 
		  \\ &= \bigsqcup_{
			\hat S \in \binom {F_i} \ell 
		} \tup {\ell + \abs S} {\alpha_i (\partial (S \uplus \hat S))}
		\\&= \bigsqcup_{\substack{
			0 \leq \ell_1 \leq \ell \\
			\ell_2 = \ell - \ell_1 \\
			\hat S_1 \in \binom {F_j} {\ell_1} \\
			\hat S_2 \in \binom {F_k} {\ell_2}
		}} \tup {\ell + \abs S} {\alpha_i (\partial (S \uplus \hat S_1 \uplus \hat S_2))}
		\\&\overann={\autodref{stmt:join:propo}} \bigsqcup_{\substack{
			0 \leq \ell_1 \leq \ell \\
			\ell_2 = \ell - \ell_1 \\
			\hat S_1 \in \binom {F_j} {\ell_1} \\
			\hat S_2 \in \binom {F_k} {\ell_2}
		}} \tup{\ell + \abs S} {\alpha_j (\partial (S \uplus \hat S_1)) + \alpha_k ( \partial( S \uplus \hat S_2)) - \alpha_i(\partial S) }
		\\&=
		\bigsqcup_{\substack{
			0 \leq \ell_1 \leq \ell \\
			\ell_2 = \ell - \ell_1
		}}
		\bigsqcup_{\hat S_1 \in \binom {F_j} {\ell_1}}
		\bigsqcup_{\hat S_2 \in \binom {F_k} {\ell_2}}
			\tup{\ell + \abs S} {\alpha_j (\partial (S \uplus \hat S_1))
			+ \alpha_k(\partial(S \uplus \hat S_2))
			- \alpha_i(\partial S)
			}
		\\&\overann={\autodref{stmt:compatible}}
		\phantom{x\,} \left(
		\bigsqcup_{\substack{
			0 \leq \ell_1 \leq \ell \\
			\ell_2 = \ell - \ell_1
		}}
		\bigsqcup_{\hat S_1 \in \binom {F_j} {\ell_1}}
		\bigsqcup_{\hat S_2 \in \binom {F_k} {\ell_2}}
			\tup {\ell + 2\abs S} {\alpha_j (\partial (S \uplus \hat S_1))
			+ \alpha_k (\partial (S \uplus \hat S_2)) } \right)
\\&\phantom{{}={}}\ominus \tup {\abs S} {\alpha_i(\partial S)}
		\\&\overann={\autodref{stmt:compatible}}
		\phantom{x\,}
		\left(
		\bigsqcup_{\substack{
			0 \leq \ell_1 \leq \ell \\
			\ell_2 = \ell - \ell_1
		}}
		\bigsqcup_{\hat S_1 \in \binom {F_j} {\ell_1}}
		\bigsqcup_{\hat S_2 \in \binom {F_k} {\ell_2}}
		\!\!\!\!\enpar*{
			\tup {\ell_1 + \abs S} {\alpha_j (\partial (S \uplus \hat S_1))}
			\oplus \tup{\ell_2 + \abs S}{\alpha_k (\partial (S \uplus \hat S_2))} } \!\!\right)\hskip -2em
		\\&\phantom{{}={}}\ominus \tup {\abs S} {\alpha_i(\partial S)}
		\\&=\left(\bigsqcup_{\substack{
			\abs S \leq \ell_1 \leq \ell \\
			\ell_2 = \ell - \ell_1
		}}
		\enpar*{
			\bigsqcup_{\hat S_1 \in \binom {F_j} {\ell_1}}
			\!\!\!\!\tup{
				\ell_1 + \abs S }{ \alpha_j (\partial (S \uplus \hat S_1))  
			} \oplus
			\bigsqcup_{\hat S_2 \in \binom {F_k} {\ell_2}}
				\!\!\!\!\tup {\ell_2 + \abs S} {\alpha_k (\partial (S \uplus \hat S_2))}
		\!\!} \!\!\right)\hskip -2em
		\\&\phantom{{}={}}\ominus \tup {\abs S} {\alpha_i(\partial S)}
		\\&=\left(\bigsqcup_{\substack{
			0 \leq \ell_1 \leq \ell \\
			\ell_2 = \ell - \ell_1
		}}
		\tuplike{
			\B_j(\ell_1, S)
		\oplus
			\B_k(\ell_2, S)
		}
		\right)
		\ominus \tup {\abs S} {\alpha_i(\partial S)}. \qedhere
	\end{align*}
\end{proof}
 
	\section{Running Time Of Our Framework}
	\onlysubm{We claim that our algorithm has an overall running time of
	$\O(2^tn^2)$. To prove this we first need to deal with some preprocessing
	steps. The first step was already mentioned: converting the tree decomposition
	into a nice tree decomposition. This can be done in $\O(nt^2)$ according to
	\autoref{stmt:nice}. We model sets as bit vectors of length $n$; we omit the
	(very technical) details on how set operations can be implemented on bit vectors
	such that they all take only constant time.
	The second preprocessing step is creating an adjacency matrix of the graph
	in time $\O(n^2)$; this is necessary to be able to compute the explicit sums
	occurring in our recurrences in time $\O(n)$ each. We now claim that our dynamic
	program takes the running time specified in \autoref{figure:running_time_goals}
	per node, depending on the node's type. As we only have $\O(n)$ nodes in the
	decomposition (it is a small decomposition), this directly implies that
	the computation of $\Gamma_i$ only takes time $\O(2^tn^2)$.

	\begin{figure}[t]
		\centering
		\begin{tabulary}{\textwidth}{|L||LL|}
			\hline
			Leaf node & $\O(2^t n)$ & per node 
			\\ \hline
			Forget node & $\O(2^t \abs{F_i})$ & per node
			\\ \hline
			Introduce node & $\O(2^t n)$ & per node
			\\ \hline
			\multirow{2}{*}{Join node} & $\O(2^t \abs{F_j \times F_k})$ & per node
			\\ 
			& $\O(2^t n^2)$ & for all nodes
			\\ \hline
		\end{tabulary}
		\caption{Running times of the computation of $\Gamma_i$ for
		a node $i \in I$ (if there is one child $i$, let $j$ be that child, and if
		there are two children, let $j,k$ be those children of $i$) depending
		on the node's type}
		\label{figure:running_time_goals}
	\end{figure}

	\paras{Leaf Node} Let $i$ be a leaf node. Then $F_i = \emptyset$.
	Hence $\Gamma_i$ has $\O(2^t)$ entries. We make use of \multiautoref{eq:leaf:1, eq:leaf:2}
	to compute these entries; both equations have a constant number of sums, and
	each sum can be computed in $\O(n)$ using the adjacency matrix and basic
	set operations.

	\paras{Forget Node}
	Let $i$ be a forget node with child $j$. Then $X_i \uplus \sing v = X_j$
	for some $v \in V$. There are $\O(2^t \abs{F_i})$ entries in $\Gamma_i$.
	As we only use \autoref{eq:forget} to compute the values, which takes
	constant time per value, the overall running time for a single forget node
	is $\O(2^t \abs{F_i})$.
	
	\paras{Introduce Node}
	Let $i$ be an introduce node with child $j$. Then $X_i = X_j \uplus \sing v$
	for some $v \in V$. There are $\O(2^t \abs{F_i}) \subseteq \O(2^tn)$ entries
	in $\Gamma_i$ we need to compute. For the entries $\Gamma_i(0, S)$ for
	$S \subseteq X_i$ we use \multiautoref{eq:introduce:1, eq:introduce:2} and explicitly compute the
	$\alpha_i$ expressions as sums in $\O(n)$ each. There are at most
	$\O(2^t)$ entries for $\ell = 0$, thus for all these entries we need at most
	$\O(2^tn)$ time.
	For all other $\O(2^tn - 2^t) = \O(2^tn)$ entries we make use
	of \autoref{eq:introduce:3} which allows us now to use 
	\multiautoref{eq:introduce:1, eq:introduce:2} to compute the remaining entries
	in time $\O(1)$ each. The overall running time for a single introduce node
	is hence $\O(2^tn)$.

	\paras{Join Node}
	Let $i$ be a join node with left child $j$ and right child $k$.
	Then $X_i = X_j = X_k$. There are $\O(2^t \abs{F_i}) \subseteq \O(2^t \abs{F_j}\cdot\abs{F_k})$
	entries we need
	to compute. We only use \autoref{eq:join:1} to compute them. For a fixed $S \subseteq
	X_i$, we need to compare $\abs{F_j} \cdot \abs{F_k}$ different pairs in that equation.
	This means
	that the running time -- per entry -- is in $\O(\abs{F_j} \cdot \abs{F_k})
	= \O(\abs{F_j \times F_k})$. Using \autoref{eq:nn} (which is a consequence of
	\autoref{stmt:pairs}) it follows that the overall running time for all entries
	of all join nodes is bounded by $\O(2^tn^2)$. }

	\section{Hardness}
	We need the following Proposition in order to be able to show that the
	hardness result of Max-Cut by \textcite{DBLP:journals/talg/LokshtanovMS18}
	can be extended to the case where a tree decomposition is given as part
	of the input. The idea is to compute such a decomposition for the result
	of their reduction from SAT to Max-Cut in polynomial time.
	\begin{propo} \label{lemma:k-forest-decomp}
		Let $G = (V,E)$ and $S \subseteq V$.
		If $G - S$ is a forest, then (given $S$)
		one can compute a small tree decomposition of $G$ that has width
		$\leq \abs S + 1$
		in time $\O(\poly(\abs E + \abs V))$. \stopp
	\end{propo}
	\begin{proof}
		Fix $G = (V,E)$ and $S$.
		Note that it is trivially possible to compute a small tree
		decomposition of a tree in time linear in its nodes
		and edges. Compute the tree decompositions for all connected
		components in $G' \defeq G - S$ (which are all trees by assumption).
		Rename
		the node sets (and all the occurrences accordingly)
		of the decompositions such that the union of all node
		sets is of the form $[k] \setminus \set 1$ for some $k \in \N$
		and the node sets are disjunct.
		Add a new node $1$ to the forest
		of tree decompositions and set the associated bag to be
		$X_1 \defeq \emptyset$. Connect the new node $1$ to
		an arbitrary node of each tree decomposition. 
		It is easy to check that we just constructed a
		small tree decomposition for $G'$ of width $1$.
		Now modify this decomposition to a small tree decomposition
		of $G$ as follows: Add the set $S$ to all bags. Again,
		it is easy to check that we obtain a small tree decomposition
		of $G$. The same holds for the running time.
		The width of the constructed decomposition is bounded
		from above by the term $1 + \abs S$ as every bag of the
		decomposition for $G'$ had at most $2$ vertices before
		we added the vertices within $S$.
	\end{proof}
	\begin{lemma}
		\label{thm:adv-max-cut-hard}
		There is no $\O((2-\eps)^{t} \poly n)$ algorithm to decide Max-Cut
		for any $\eps \geq 0$. \stopp
	\end{lemma}
	\begin{proof}
		Consider the graph created in the reduction by \textcite{DBLP:journals/talg/LokshtanovMS18}
		and observe that following statements:
		\begin{enumerate}
			\item \label{app:thm:adv-max-cut-hard:proof:1}
			The number of vertices after the reduction is $\O(\abs \phi + n) = \O(\abs \phi)$;
			\item \label{app:thm:adv-max-cut-hard:proof:2}
			For every clause there is exactly one cycle created;
			\item \label{app:thm:adv-max-cut-hard:proof:3}
			Any pair of such cycles does share exactly one vertex: $x_0$;
			\item \label{app:thm:adv-max-cut-hard:proof:4}
			Apart from those partially overlapping cycles, there are only
			$n$ additional vertices;
			\item \label{app:thm:adv-max-cut-hard:proof:5}
			Every vertex on those cycles that is not $x_0$
			has degree at most $3$ as it is connected to at most
			one $\widehat v_i$;
		\end{enumerate}
From \itemref{app:thm:adv-max-cut-hard:proof:4} and
		\itemref{app:thm:adv-max-cut-hard:proof:5}
		we can now deduce
		that the number of edges after the reduction is
		$\O(\abs \phi)$. Thus the encoding of
		the graph has asymptotically the same size as the encoding
		of $\phi$. The correctness of the reduction was shown in
		\cite{DBLP:journals/talg/LokshtanovMS18}.

		Now observe that \itemref{app:thm:adv-max-cut-hard:proof:3}
		and \itemref{app:thm:adv-max-cut-hard:proof:4} imply that
		removing all variable vertices $\widehat v_i$ and $x_0$
		does result in a set of paths. This is exactly the statement
		we need in order to be able to apply \autoref{lemma:k-forest-decomp}
		which completes the proof.
	\end{proof}
\end{document}